\colorlet{Changes@Color}{red}
\newtheorem{thm}{Theorem}[section]
\newtheorem{prop}[thm]{Proposition}
\newtheorem{lem}[thm]{Lemma}
\newtheorem{cor}[thm]{Corollary}
\theoremstyle{remark}
\theoremstyle{definition}
\renewcommand{\phi}{\varphi} 
\newcommand{\tr}{\mathrm{Tr}} 
\newcommand{\C}{\mathcal C} 
\newcommand{\E}{\mathbb{E}} 
\newcommand{\T}{\mathcal T} 
\newcommand{\U}{\mathcal U} 
\renewcommand{\H}{\mathcal H}
\newcommand{\cut}{\mathrm{cut}} 
\newcommand{\<}{\langle}
\renewcommand{\>}{\rangle}
\newcommand{\Id}{\mathrm{Id}} 
\newcommand{\ran}{\mathrm{ran}} 
\newcommand{\dom}{\mathrm{dom}} 
\newcommand{\nooutput}[1]{}
\begin{document}
 
\date{Version of \today}

\title[Finite dimensional approximation of forward prices]{Approximation of forward curve models in commodity markets with arbitrage-free 
finite dimensional models}
\begin{abstract}
  In this paper we show how to approximate a Heath-Jarrow-Morton dynamics for the forward prices in commodity markets with arbitrage-free models which have
a finite dimensional state space. Moreover, we recover a closed form representation of the forward price dynamics in the approximation models and derive the rate of convergence uniformly over an interval of time to maturity to the true dynamics under certain additional smoothness conditions. In the Markovian case we can strengthen the convergence to be uniform over time as well. Our results are based on the construction of a convenient 
Riesz basis on the state space of the term structure dynamics. 
\end{abstract}

\author[Benth]{Fred Espen Benth}
\address[Fred Espen Benth]{\\
Department of Mathematics \\
University of Oslo\\
P.O. Box 1053, Blindern\\
N--0316 Oslo, Norway}
\email[]{fredb\@@math.uio.no}
\urladdr{http://folk.uio.no/fredb/}
\author[Kr\"uhner]{Paul Kr\"uhner}
\address[Paul Kr\"uhner]{\\ 
Financial \& Actuarial Mathematics\\
Vienna University of Technology\\
Wiedner Hauptstr. 8/E105-1\\
AT-1040 Vienna, Austria}
\email[]{paulkrue@fam.tuwien.ac.at}
\urladdr{https://fam.tuwien.ac.at/~paulkrue/}

\thanks{F.\ E.\ Benth acknowledges financial support from the project "Managing Weather Risk in Energy Markets (MAWREM)", funded by the ENERGIX program
of the Norwegian Research Council.}

\subjclass[2010]{91B24, 91G20}

\keywords{Energy markets, Heath-Jarrow-Morton, Non harmonic Fourier analysis, arbitrage free approximations}

\maketitle

\section{Introduction}
We develop arbitrage-free approximations to the forward term structure dynamics in commodity 
markets. The approximating term structure models have finite dimensional state space, and therefore 
tractable for further analysis and numerical simulation. We provide results on the convergence of the 
approximating term structures and characterize the speed under reasonable smoothness properties
of the true term structure. Our results are based on the construction of a convenient Riesz basis on the state space of the term structure dynamics.  

In the context of fixed-income markets, Heath, Jarrow and Morton~\cite{HJM} propose to  model the entire term structure of interest rates. Filipovi\'c \cite{filipovic.01} reinterprets this approach in the so-called Musiela parametrisation, i.e., studying the so-called forward rates as solutions of first-order stochastic 
partial differential equations. This class of stochastic partial differential equations is often referred to as
the Heath-Jarrow-Morton-Musiela (HJMM) dynamics. This highly successful method has been transferred to 
other markets, including commodity and energy futures markets (see Clewlow and 
Strickland~\cite{CS} and Benth,
Saltyte Benth and Koekebakker~\cite{BSBK-book}), where the term structure of forward and 
futures prices are modelled by similar stochastic partial differential equations.

An important stream of research in interest rate modelling has been so-called finite dimensional realizations
of the solutions of the HJMM dynamics (see e.g., Bj\"ork and Svensson~\cite{BjorkSvensson}, Bj\"ork and Landen~\cite{BjorkLanden}, Filipovic and Teichmann~\cite{FT} and Tappe~\cite{Tappe2010}). Starting out with an equation for the forward rates driven by a 
$d$-dimensional Wiener process, the question has been under what conditions on the volatility and drift 
do we get solutions which belongs to a finite dimensional space, that is, when can the dynamics of the
whole curve be decomposed into a finite number of factors. This property has a close connection with
principal component analysis (see Carmona and Tehranchi~\cite{CT}), but is also convenient when 
it comes to further analysis like estimation, simulation, pricing and portfolio management (see 
Benth and Lempa~\cite{BL} for the latter).  

In energy markets like power and gas, there is empirical and economical evidence for high-dimensional 
noise. Moreover, the noise shows clear leptokurtic signs (see Benth, \v{S}altyt\.e Benth and Koekebakker~\cite{BSBK-book} and references therein). These empirical insights motivate the use of 
infinite dimensional L\'evy processes driving the noise in the HJMM-dynamics modelling the forward term structure. We refer to Carmona and Tehranchi~\cite{CT} for a thorough analysis of HJMM-models with
infinite dimensional Gaussian noise in interest rate markets. Benth and Kr\"uhner~\cite{BK-stochastics}
introduced a convenient class of infinite dimensional L\'evy processes via subordination of Gaussian 
processes in infinite dimensions. These models were used in analysing stochastic partial differential 
equations with infinite dimensional L\'evy noise in Benth and Kr\"uhner~\cite{BK-coms}. Further,
pricing and hedging of derivatives in energy markets based on such models were studied in
Benth and Kr\"uhner~\cite{BK-siam}. 

The present paper is motivated by the need of an arbitrage-free approximation of Heath, Jarrow, Morton style models -- using the Musiela parametrisation -- in electricity finance. Related research has been carried out by Henseler, Peters and Seydel \cite{Henseler.al.15} who construct a finite-dimensional affine model where a refined principle component analysis (PCA) method does yield an arbitrage free approximation of the term structure model.
Our main result Theorem~\ref{t:main statement} states that the arbitrage-free models for the underlying forward curve
process $f(t,x)$, $x\geq0$ being time to maturity and $t\geq 0$ is current time, can be approximated with processes of the form
 $$ 
f_k(t,x) = S_k(t) + \sum_{n=-k}^k U_n(t)g_n(x) \,,
$$
where $S_k$ denotes the spot prices in the approximating model, $g_{-k},\dots,g_k$ are deterministic functions and $U_{-k},\dots,U_k$ are one-dimensional Ornstein Uhlenbeck type processes. Obviously, models of this type are much easier to handle in applications than general solutions for the HJMM equation. The 
approximation $f_k$ is again a solution of an HJMM equation, and as such being an arbitrage-free model
for the forward term structure. We prove a uniform convergence in space of $f_k$ to the "real" forward 
price curve $f$, pointwise in time. The convergence rate is of order $k^{-1}$ when the 
forward curve $x\mapsto f(t,x)$ is twice continuously differentiable. Our approach
is an alternative to numerical approximations of the HJMM dynamics based on finite difference schemes or finite element
methods, where arbitrage-freeness of the approximating dynamics is not automatically ensured.
We refer to Barth~\cite{Barth} for an analysis of finite element methods 
applied to
stochastic partial differential equations of the type we study.

We refine our results to the Markovian case, where the convergence is slightly strengthened to be uniform over time as well.  Our approach goes via the explicit construction of
a Riesz basis for a subspace of the so-called Filipovi\'c space (see Filipovi\'c~\cite{filipovic.01}),
a separable Hilbert space of absolutely continuous functions on the positive real line with
(weak) derivative disappearing at a certain speed at infinity. The basis will be the functions $g_n$ in the
approximation $f_k$, and the subspace is defined by concentrating the functions in the Filipovi\'c space
to a finite time horizon $x\leq T$. This space was defined in Benth and Kr\"uhner~\cite{BK-coms}, 
and we extend the analysis here to accomodate the arbitrage-free finite dimensional approximation of 
the HJMM-dynamics. We rest on properties of $C_0$-semigroups and stochastic integration with
respect to infinite dimensional L\'evy processes (see Peszat and Zabczyk~\cite{peszat.zabczyk.07}) 
in the analysis.

This paper is organised as follows. In Section~\ref{s:the model} we start with the mathematical formulation 
of the HJMM dynamics for forward rates set in the Filipovi\'c space. The Riesz basis that will make the foundation for our approximation is defined and analysed in detail in Section~\ref{s:mathematical Preliminaries}. The arbitrage-free finite dimensional approximation to term structure modelling is 
constructed in Section~\ref{s:approximation}, where we study convergence properties. The Markovian
case is analysed in the last Section~\ref{s:markovian}.

\section{The model of the forward price dynamics}
\label{s:the model}

Throughout this paper we use the Hilbert space
$$
H_\alpha := \left\{f\in AC(\mathbb R_+,\mathbb C): \int_0^\infty |f'(x)|^2e^{\alpha x} dx <\infty\right\}\,,
$$
where $AC(\mathbb R_+,\mathbb C)$ denotes the space of complex-valued absolutely continuous functions on $\mathbb R_+$. We endow
$H_{\alpha}$ with the scalar product $\<f,g\>_\alpha:=f(0)\overline{g}(0) + \int_0^\infty f'(x)\overline{g}'(x) e^{\alpha x}dx$, and denote
the associated norm by $\|\cdot\|_{\alpha}$. Filipovi\'c~\cite[Section 5]{filipovic.01} shows that $(H_\alpha,\|\cdot\|_\alpha)$ is a separable Hilbert space\footnote{Note that
Filipovi\'c~\cite{filipovic.01} does not consider complex-valued functions. In our context, this minor extension is convenient, as will be clear later.}. 
This space has been used in Filipovi\'c~\cite{filipovic.01} for term structure modelling of bonds and many mathematical properties have been derived therein.
We will frequently refer to $H_{\alpha}$ as the {\it Filipovi\'c space}.

We next introduce our dynamics for the term structure of forward prices in a commodity market. Denote by $f(t,x)$ the price at time $t$ of
a forward contract where time to delivery of the underlying commodity is $x\geq 0$. We treat $f$ as a stochastic process in time with values in
the Filipovi\'c space $H_{\alpha}$. More specifically, we assume that the process $\{f(t)\}_{t\geq 0}$ follows the HJM-Musiela model
which we formalize next.

On a complete filtered probability space $(\Omega,\{\mathcal{F}_t\}_{t\geq 0},\mathcal{F},P)$, where the filtration is assumed to be complete and right continuous, 
we work with an $H_{\alpha}$-valued L\'evy process $\{L(t)\}_{t\geq 0}$ (cf.\ Peszat and Zabczyk~\cite[Theorem 4.27(i)]{peszat.zabczyk.07} for the construction of $H_\alpha$-valued L\'evy processes). We assume that $L$ has finite variance and mean equal to zero, and denote its covariance operator by $\mathcal Q$. 
Let $f_0\in H_\alpha$ and $f$ be the solution of the stochastic partial differential equation (SPDE)
\begin{equation}
\label{e:HJMM-equation}
 df(t) = \partial_x f(t) dt + \beta(t) dt + \Psi(t)dL(t),\quad t\geq 0, f(0)=f_0
\end{equation}
where $\beta\in L^1((\Omega\times\mathbb R_+,\mathcal P,P\otimes\lambda),H_\alpha)$, $\mathcal P$ being the predictable $\sigma$-field, and 
$\Psi \in \mathcal{L}^2_{L}(H_\alpha):=\bigcup_{T>0}\mathcal{L}^2_{L,T}(H_\alpha) $ where the latter space is defined as in {Peszat and Zabczyk~\cite[page 113]{peszat.zabczyk.07}}. For $t\geq 0$, denote by $\mathcal{U}_t$ the shift semigroup on $H_{\alpha}$ defined by
$\mathcal{U}_t f=f(t+\cdot)$ for $f\in\H_{\alpha}$. It is shown in Filipovi\'c~\cite{filipovic.01} that $\{\mathcal{U}_t\}_{t\geq 0}$ is a $C_0$-semigroup on 
$H_{\alpha}$, with generator $\partial_x$. Recall, that any $C_0$-semigroup admits the bound $\Vert\mathcal U_t\Vert_{\mathrm{op}}\leq Me^{wt}$ for some $w,M>0$ and any $t\geq 0$. Here, $\|\cdot\|_{\mathrm{op}}$
denotes the operator norm. In fact, in Filipovi\'c~\cite[Equation (5.10)]{filipovic.01} and Benth and Kr\"uhner~\cite[Lemma~3.4]{benth.kruehner.15} 
it is shown that $\Vert\mathcal U_t\Vert_{\mathrm{op}}\leq C_{\mathcal{U}}$ for any $t\geq 0$ and a constant $C_{\mathcal{U}}:=\sqrt{2(1\wedge\alpha^{-1})}$. Thus $s\mapsto \mathcal U_{t-s}\beta(s)$ is Bochner-integrable and $s\mapsto \mathcal U_{t-s}\Psi(s)$ is integrable with respect to $L$. The unique mild solution of \eqref{e:HJMM-equation} is
\begin{equation}
\label{e:HJMM-equation-mild}
f(t)=\mathcal{U}_tf_0+\int_0^t\mathcal{U}_{t-s}\beta(s)\,ds+\int_0^t\mathcal{U}_{t-s}\Psi(s)\,dL(s)\,. 
\end{equation}

If we model the forward price dynamics $f$ in a risk-neutral setting, the drift coefficient $\beta(t)$ will naturally be zero in order to ensure
the (local) martingale property of the process $t\mapsto f(t,\tau-t)$, where $\tau\geq t$ is the time of delivery of the forward. In this
case, the probability $P$ is to be interpreted as the equivalent martingale measure (also called the pricing measure). However, with
a non-zero drift, the forward model is stated under the market probability and $\beta$ can be related to the risk premium in the market. 

In energy markets like power and gas, the forward contracts deliver over a period, and forward prices can be expressed
by integral operators on the Filipovi\'c space applied on $f$ (see Benth and Kr\"uhner~\cite{benth.kruehner.14,benth.kruehner.15} for more details). 

The dynamics of $f$ can also be considered as a model for the forward rate in fixed-income theory, see Filipovi\'c~\cite{filipovic.01}.  This is indeed the 
traditional application area and point of analysis of the SPDE in \eqref{e:HJMM-equation}. Note, however, that the original no-arbitrage condition in the HJM approach for interest rate markets is different from the no-arbitrage condition used here. If $f$ is understood as the forward rate modelled
in the risk-neutral setting, there is a no-arbitrage relationship between the drift $\beta$, the volatility $\sigma$ and the covariance of the driving noise $L$. We refer to Carmona and Tehranchi~\cite{CT} for a detailed analysis.

\section{A Riesz basis for the Filipovi\'c space}\label{s:mathematical Preliminaries}

In this section we introduce a Riesz basis for a suitable subspace of $H_\alpha$ 
defined in Benth and Kr\"uhner~\cite[Appendix A]{benth.kruehner.14} and present various of its 
properties. Moreover, we give refined statements for this basis and also identify new properties. 
We recall from Young~\cite{Young.80} that any Riesz basis $\{g_n\}_{n\in\mathbb{N}}$ on a separable Hilbert space can 
be expressed by $g_n = \T e_n$ where $\{e_n\}_{n\in\mathbb N}$ is an orthonormal basis and $\T$ is a bounded invertible linear operator. 
For further properties and definitions of Riesz bases, see Young~\cite{Young.80}. 

In Section \ref{s:approximation} we want to employ the spectral method to an approximation of the SPDE in \eqref{e:HJMM-equation} involving the differential operator on the Filipovi\'c space $H_\alpha$. Thus, it would be convenient to have available the eigenvector basis for the differential operator. However, its eigenvectors do not seem to have nice basis properties. Instead, we propose to use a system of vectors which forms a Riesz basis which turns out to be almost an eigenvector system for the differential operator. This property will be made precise in Propositions~\ref{p:U and the riesz basis} and \ref{l:commutator of U and projectors}. 
Finally, we will identify the convergence speed of the Riesz basis expansion. 

Fix $\lambda>0$, $T>0$, and introduce
\begin{equation}
\mathrm{cut}:\mathbb R_+\rightarrow [0,T)\,,\qquad x\mapsto x-\max\{Tz: z\in\mathbb Z:Tz\leq x\}\,,
\end{equation}
and
\begin{equation}
\label{def-A-operator}
   \mathcal A:L^2([0,T),\mathbb C)\rightarrow L^2(\mathbb R_+,\mathbb C)\,,\qquad 
   f\mapsto \left(x\mapsto e^{-\lambda x}f(\mathrm{cut}(x))\right)\,.
  \end{equation}
Here, $L^2(A,\mathbb C)$ is the space of complex-valued square integrable functions on the Borel set $A\subset\mathbb R_+$ equipped with the 
Lebesgue measure. The inner product of $L^2(A,\mathbb C)$ will be denoted $(\cdot,\cdot)_2$ and the corresponding norm $|\cdot|_2$. We remark that 
the set $A$ will be clear from the context and thus not indicated in the notation for
norm and inner product.
 
We define
   \begin{align}
     g_*(x) &:= 1, \label{e:g-star-def}\\
     g_n(x) &:= \frac{1}{\lambda_n\sqrt{T}}\left(\exp\left(\lambda_nx\right)-1\right)\,, \label{e:g-n-def}
   \end{align}
where 
\begin{equation}
\label{e:lambda-n-def}
\lambda_n:=\frac{2\pi i}{T}n-\lambda-\frac{\alpha}{2}\,,
\end{equation} 
for any $n\in\mathbb Z$, $x\geq0$.
It is simple to verify that $g_n\in H_\alpha$ for any $n\in\mathbb Z$ and $g_*\in H_\alpha$. 
As we will see, the system of vectors $\{g_*,\{g_n\}_{n\in\mathbb Z}\}$ forms a Riesz basis and we will use this to obtain arbitrage-free finite-dimensional approximations of the forward price dynamics \eqref{e:HJMM-equation}. 
  
We start our analysis with some elementary properties of the operator $\mathcal A$ which have been proven in Benth and Kr\"uhner~\cite{benth.kruehner.14}.
\begin{lem}\label{l:stetige Einbettung}
  $\mathcal A$ is a bounded linear operator and its range is closed in $L^2(\mathbb R_+,\mathbb C)$. Moreover,
  $$ \frac{e^{-2T\lambda}}{1-e^{-2T\lambda}}| f|_2^2 \leq |\mathcal Af|_2^2 \leq \frac{1}{1-e^{-2T\lambda}}| f|_2^2$$
  for any $f\in L^2([0,T),\mathbb C)$.
\end{lem}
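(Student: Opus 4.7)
The plan is to compute $|\mathcal Af|_2^2$ explicitly by partitioning $\mathbb R_+$ into the intervals on which $\mathrm{cut}$ is affine, then reduce everything to an elementary geometric series. Linearity of $\mathcal A$ is immediate from the definition, so the only real content is the two-sided norm bound; the closed-range claim will be a formal consequence of the lower bound.

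First I would write $\mathbb R_+ = \bigcup_{k\geq 0}[kT,(k+1)T)$. On each piece $\mathrm{cut}(x)=x-kT$, so substituting $y=x-kT$ gives
\begin{equation*}
|\mathcal Af|_2^2 \;=\; \sum_{k=0}^{\infty}\int_{kT}^{(k+1)T} e^{-2\lambda x}|f(\mathrm{cut}(x))|^2\,dx \;=\; \sum_{k=0}^{\infty}e^{-2\lambda kT}\int_0^T e^{-2\lambda y}|f(y)|^2\,dy.
\end{equation*}
Summing the geometric series in $k$ yields
\begin{equation*}
|\mathcal Af|_2^2 \;=\; \frac{1}{1-e^{-2\lambda T}}\int_0^T e^{-2\lambda y}|f(y)|^2\,dy.
\end{equation*}
Since $\lambda>0$ and $y\in[0,T)$, one has the pointwise bounds $e^{-2\lambda T}\leq e^{-2\lambda y}\leq 1$. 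Inserting these inside the integral gives the desired two-sided inequality. In particular, the upper bound shows $\mathcal A$ is a bounded linear operator.

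For closedness of the range, I would use that the lower bound makes $\mathcal A$ bounded below: if $\mathcal Af_n\to g$ in $L^2(\mathbb R_+,\mathbb C)$, then $(\mathcal Af_n)$ is Cauchy, and the inequality $|f_n-f_m|_2^2\leq\frac{1-e^{-2T\lambda}}{e^{-2T\lambda}}|\mathcal A(f_n-f_m)|_2^2$ shows $(f_n)$ is Cauchy in $L^2([0,T),\mathbb C)$. Letting $f:=\lim f_n$, continuity of $\mathcal A$ forces $\mathcal Af=g$, so $g\in\ran\mathcal A$.

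There is no real obstacle here; the only minor subtlety to verify cleanly is the justification for interchanging the sum and the integral in the computation of $|\mathcal Af|_2^2$, but this is immediate from Tonelli's theorem since the integrand is nonnegative.
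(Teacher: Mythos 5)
Your proof is correct, and the computation (splitting $\mathbb R_+$ into the periods $[kT,(k+1)T)$, summing the geometric series, and bounding $e^{-2\lambda y}$ between $e^{-2\lambda T}$ and $1$, with closedness of the range following from the lower bound) is exactly the standard argument; the paper itself only cites Benth and Kr\"uhner [Lemma A.1] for this, but the same geometric-series computation appears verbatim in its proof of Proposition~\ref{p:Riesz basis on L2}, so your route coincides with the intended one.
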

\begin{proof}
 This proof can be found in Benth and Kr\"uhner~\cite[Lemma A.1]{benth.kruehner.14}.
\end{proof}

In the following Proposition~\ref{p:Riesz basis on L2}, we calculate a Riesz basis of the space $\ran(\mathcal{A})$
and its biorthogonal system. The Riesz basis will be given as the image of an orthonormal basis of $L^2([0,T),\mathbb C)$. Consequently, its biorthogonal 
system is given by the image of $(\mathcal A^{-1})^*$, which we calculate in the Lemma below:
\begin{lem}
\label{lem:dual_A}
 The dual $(\mathcal A^{-1})^*$ of the inverse of $\mathcal A:L^2([0,T),\mathbb C)\rightarrow \ran(\mathcal{A})$
is given by
 \begin{align*}
  (\mathcal A^{-1})^*&:L^2([0,T),\mathbb C)\rightarrow \ran(\mathcal{A}) ,\\
           (\mathcal A^{-1})^*f(x)&=(1-e^{-2\lambda T})e^{-\lambda x}\left( e^{2\lambda \cut(x)}f(\cut(x)) \right) \\
             &= (1-e^{-2\lambda T})e^{2\lambda \cut(x)} \mathcal Af(x),\quad x\geq0\,.
 \end{align*}
\end{lem}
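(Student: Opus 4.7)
My plan is to verify the formula by a direct computation of the adjoint pairing, following the standard route through the inner product identity that characterises an adjoint.

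First I would observe that Lemma~\ref{l:stetige Einbettung} already gives that $\mathcal A: L^2([0,T),\mathbb C)\to \ran(\mathcal A)$ is bounded with bounded inverse (since $\mathcal A$ is bounded below on $L^2([0,T),\mathbb C)$), and $\ran(\mathcal A)$ is closed in $L^2(\mathbb R_+,\mathbb C)$. Hence $\mathcal A^{-1}:\ran(\mathcal A)\to L^2([0,T),\mathbb C)$ is a bounded linear operator between Hilbert spaces, and $(\mathcal A^{-1})^*$ is the unique bounded linear operator from $L^2([0,T),\mathbb C)$ to $\ran(\mathcal A)$ characterised by
$$
\<\mathcal A^{-1} g, f\>_{L^2([0,T))} = \<g, (\mathcal A^{-1})^* f\>_{L^2(\mathbb R_+)}\quad \text{for all } g\in \ran(\mathcal A),\ f\in L^2([0,T),\mathbb C).
$$

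Next I would let $B$ denote the candidate operator given by the right-hand side of the claimed formula and check two things: (i) that $Bf\in\ran(\mathcal A)$ for all $f$, and (ii) that $B$ satisfies the adjoint identity. Point (i) is immediate from the factorisation $Bf(x) = (1-e^{-2\lambda T})\mathcal A[e^{2\lambda\cdot}f(\cdot)](x)$, since $e^{2\lambda \cdot}f(\cdot)\in L^2([0,T),\mathbb C)$ (multiplication by a bounded function on a bounded interval).

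For (ii), writing $g=\mathcal A h$ with $h\in L^2([0,T),\mathbb C)$, so that $\mathcal A^{-1} g = h$, I would compute
$$
\<\mathcal A h, Bf\>_{L^2(\mathbb R_+)} = \int_0^\infty e^{-\lambda x} h(\cut(x))\,\overline{Bf(x)}\,dx
$$
and split the integral over the intervals $[kT,(k+1)T)$, $k\in\mathbb N_0$. On each such interval $\cut(x)=x-kT$, so substituting $y=x-kT$ and inserting the formula for $Bf$, the summand becomes
$$
(1-e^{-2\lambda T})\,e^{-2\lambda kT}\int_0^T h(y)\overline{f(y)}\,dy,
$$
where the factor $e^{2\lambda\cut(x)}e^{-\lambda x}e^{-\lambda x}$ collapses on $[kT,(k+1)T)$ to $e^{-2\lambda kT}$. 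Summing the geometric series $\sum_{k\geq 0} e^{-2\lambda kT} = (1-e^{-2\lambda T})^{-1}$ cancels the prefactor and gives exactly $\<h,f\>_{L^2([0,T))}$, which is the required identity.

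Finally, I would note that the second displayed form of $(\mathcal A^{-1})^*f$ is just a rewriting of the first using $\mathcal Af(x)=e^{-\lambda x}f(\cut(x))$. I do not anticipate a real obstacle here; the only mild care needed is in justifying the interchange of sum and integral when splitting over the intervals $[kT,(k+1)T)$, which is routine via Fubini/Tonelli since after bounding $|f|$ and $|h|$ by their $L^2$ norms on $[0,T)$ (via Cauchy--Schwarz), the series $\sum_k e^{-2\lambda kT}$ is absolutely summable.
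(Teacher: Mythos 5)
Your proposal is correct and follows essentially the same route as the paper: verify the adjoint identity by pairing the candidate against a generic element $\mathcal Ag$ of $\ran(\mathcal A)$, decompose the integral over the intervals $[kT,(k+1)T)$, and sum the resulting geometric series to recover $(f,g)_2$ on $[0,T)$. The extra remarks on $Bf\in\ran(\mathcal A)$ and the Fubini justification are sound but not substantively different from the paper's argument.
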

\begin{proof}
Let $f,g\in L^2([0,T],\mathbb C)$
and define $h(x):=(1-e^{-2\lambda T})e^{2\lambda \cut(x)} \mathcal Af(x)$ for any $x\geq0$. Then we have
\begin{align*}
  (h,\mathcal Ag)_2&= \int_0^\infty h(y) \overline{\mathcal Ag(y)} dy \\
             &= (1-e^{-2\lambda T}) \sum_{n=0}^\infty \int_{nT}^{(n+1)T} e^{2\lambda(x-nT)} (e^{-\lambda x}f(x-nT))(e^{-\lambda x}\overline{g(x-nT)}) dx \\
             &= (1-e^{-2\lambda T})\sum_{n=0}^\infty e^{-2\lambda nT} \int_{nT}^{(n+1)T} f(x-nT)\overline{g(x-nT)} dx \\
             &= \int_0^T f(y)\overline{g(y)} dy\,.
 \end{align*}
On the other hand,
\begin{align*}
 ((\mathcal A^{-1})^*f, \mathcal Ag)_2&=(f,g)_2= \int_0^Tf(y)\overline{g(y)} dy\,.
 \end{align*}
Since $g$ is arbitrary, we have $h = (\mathcal A^{-1})^*f$ as claimed.
\end{proof}

Parts of the next proposition can be found in Benth and Kr\"uhner~\cite[Lemma A.3]{benth.kruehner.14}. In that paper there appears to be a gap in the proof which we have filled here. 
\begin{prop}\label{p:Riesz basis on L2}
   Define
   $$ e_n(x) := \frac{1}{\sqrt{T}} \exp\left(\left(\frac{2\pi in}{T}-\lambda\right)x\right),\quad x\geq 0,n\in\mathbb Z.$$
   Then $\{e_n\}_{n\in\mathbb Z}$ is a Riesz basis on the closed subspace $\ran(\mathcal{A})$ of $L^2(\mathbb R_+,\mathbb C)$ and
    $$ F:=\{ f\in L^2(\mathbb R_+,\mathbb C): f(x)=0,x\in[0,T) \} $$
   is a closed vector space compliment of $\ran(\mathcal{A})$. The continuous linear projector $\mathcal P_{\mathcal A}$ with range $\ran(\mathcal{A})$ and kernel $F$ has operator norm $\sqrt{\frac{1}{1-e^{-2\lambda T}}}$ and we have
     $$ \mathcal P_{\mathcal A}f(x) = f(x),\quad x\in[0,T], f\in L^2(\mathbb R_+,\mathbb C).$$
     
    The biorthogonal system $\{e_n\}^*_{n\in\mathbb Z}$ for the Riesz basis $\{e_n\}_{n\in\mathbb Z}$ is given by
     $$ e_n^*(x) = \left(1-e^{-2\lambda T}\right)e^{2\lambda\cut(x)} e_n(x) $$
\end{prop}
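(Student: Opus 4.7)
The plan is to identify the proposed Riesz basis as the image of the standard Fourier basis of $L^2([0,T),\mathbb C)$ under the operator $\mathcal A$, and then read off every assertion from Lemmas~\ref{l:stetige Einbettung} and~\ref{lem:dual_A}. The crucial algebraic observation is that, writing $\tilde e_n(y):=\frac{1}{\sqrt T}\exp(2\pi i n y/T)$ for the standard orthonormal Fourier basis of $L^2([0,T),\mathbb C)$, we have $\mathcal A\tilde e_n=e_n$. Indeed, $x-\cut(x)\in T\mathbb Z$ forces $\exp(2\pi i n\cut(x)/T)=\exp(2\pi i n x/T)$, so $\mathcal A\tilde e_n(x)=e^{-\lambda x}\tilde e_n(\cut(x))=\frac{1}{\sqrt T}\exp((2\pi in/T-\lambda)x)=e_n(x)$.

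Given this identity, the Riesz basis claim is immediate: by Lemma~\ref{l:stetige Einbettung}, $\mathcal A$ is bounded with closed range, and the lower bound there shows it is injective; hence $\mathcal A:L^2([0,T),\mathbb C)\to\ran(\mathcal A)$ is a Hilbert space isomorphism. The characterisation of Riesz bases from Young recalled earlier then gives that $\{e_n\}_{n\in\mathbb Z}=\{\mathcal A\tilde e_n\}_{n\in\mathbb Z}$ is a Riesz basis on $\ran(\mathcal A)$.

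For the splitting $L^2(\mathbb R_+,\mathbb C)=\ran(\mathcal A)\oplus F$, I would define the projector directly by
$$\mathcal P_{\mathcal A}f(x):=e^{-\lambda(x-\cut(x))}f(\cut(x)),\quad x\geq 0,$$
which is simply $\mathcal A$ applied to $y\mapsto e^{\lambda y}f(y)|_{[0,T)}$ and hence belongs to $\ran(\mathcal A)$. Observing that $x-\cut(x)=0$ on $[0,T)$ yields $\mathcal P_{\mathcal A}f=f$ there, so $f-\mathcal P_{\mathcal A}f\in F$. Disjointness $\ran(\mathcal A)\cap F=\{0\}$ follows since any $g=\mathcal Ah\in F$ satisfies $e^{-\lambda x}h(x)=g(x)=0$ on $[0,T)$, forcing $h=0$. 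To pin down the norm I would split the integral across the intervals $[kT,(k+1)T)$, on which $\mathcal P_{\mathcal A}f(x)=e^{-\lambda kT}f(x-kT)$, to obtain
$$\|\mathcal P_{\mathcal A}f\|_2^2=\sum_{k=0}^\infty e^{-2\lambda kT}\int_0^T|f(y)|^2\,dy=\frac{1}{1-e^{-2\lambda T}}\int_0^T|f(y)|^2\,dy.$$
This is at most $\|f\|_2^2/(1-e^{-2\lambda T})$, with equality whenever $f$ is supported in $[0,T)$, so the operator norm equals exactly $\sqrt{1/(1-e^{-2\lambda T})}$.

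Finally, for the biorthogonal system I use the general fact that if $\mathcal T:H_1\to H_2$ is a Hilbert space isomorphism and $\{\tilde e_n\}$ an orthonormal basis of $H_1$, then the biorthogonal system of $\{\mathcal T\tilde e_n\}$ in $H_2$ is $\{(\mathcal T^{-1})^*\tilde e_n\}$, as is immediate from $\langle\mathcal T\tilde e_n,(\mathcal T^{-1})^*\tilde e_m\rangle=\langle\tilde e_n,\tilde e_m\rangle=\delta_{nm}$. Plugging in the explicit formula for $(\mathcal A^{-1})^*$ from Lemma~\ref{lem:dual_A} together with $\mathcal A\tilde e_n=e_n$ yields
$$e_n^*(x)=(\mathcal A^{-1})^*\tilde e_n(x)=(1-e^{-2\lambda T})e^{2\lambda\cut(x)}\mathcal A\tilde e_n(x)=(1-e^{-2\lambda T})e^{2\lambda\cut(x)}e_n(x),$$
as claimed. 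There is no real obstacle here; the only step requiring a bit of care is the \emph{sharp} norm identity for $\mathcal P_{\mathcal A}$, which is why the explicit geometric-series computation above is needed rather than a direct appeal to Lemma~\ref{l:stetige Einbettung}.
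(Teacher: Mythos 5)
Your proposal is correct and follows essentially the same route as the paper: identify $e_n=\mathcal A b_n$ for the Fourier orthonormal basis, invoke the two-sided bound of Lemma~\ref{l:stetige Einbettung} to see $\mathcal A$ is an isomorphism onto its closed range (hence $\{e_n\}$ is a Riesz basis), realise $\mathcal P_{\mathcal A}$ as $\mathcal A$ composed with restriction to $[0,T)$ and multiplication by $e^{\lambda x}$, compute its norm by the same geometric-series splitting over the intervals $[kT,(k+1)T)$, and obtain $e_n^*=(\mathcal A^{-1})^*b_n$ from Lemma~\ref{lem:dual_A}. The only cosmetic difference is that you write the projector by an explicit formula and verify the direct-sum decomposition and norm attainment directly, whereas the paper phrases the same operator as a composition and evaluates the norm on the orthogonal complement of the kernel; both are complete.
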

\begin{proof}
Recall that the range of $\mathcal A$ is a closed subspace of $L^2(\mathbb R_+,\mathbb C)$ due to the lower bound given in Lemma \ref{l:stetige Einbettung}. Furthermore, $\{b_n\}_{n\in\mathbb Z}$ with
  $$ b_n(x):= \frac{1}{\sqrt{T}}\exp\left(\frac{2\pi i n}{T}x\right),\quad n\in\mathbb Z,x\in[0,T)$$
 is an orthonormal basis of $L^2([0,T],\mathbb C)$. Observe, that $e_n = \mathcal Ab_n$ and hence $\{e_n\}_{n\in\mathbb Z}$ becomes a Riesz basis of 
$\ran(\mathcal{A})$.
 
 Define the continuous linear operators
 \begin{align*}
   \mathcal M_{\lambda} &:L^2(\mathbb [0,T),\mathbb C)\rightarrow L^2([0,T),\mathbb C),\mathcal M_{\lambda}f(x):=e^{\lambda x}f(x),\\
   \mathcal C &:L^2(\mathbb R_+,\mathbb C)\rightarrow L^2([0,T),\mathbb C),f\mapsto f\vert_{[0,T)}
 \end{align*}
 and $\mathcal P_{\mathcal A}:=\mathcal A\mathcal M_{\lambda}\mathcal C$. Observe, that $\mathcal M_{\lambda}\mathcal C\mathcal A$ is the identity operator on $L^2([0,T),\mathbb C)$ and hence $\mathcal P_{\mathcal A}^2=\mathcal P_{\mathcal A}$. Therefore, $\mathcal P_{\mathcal A}$ is a continuous linear projection with kernel $F$ and range $\ran(\mathcal{A})$.
  
 Let $f\in L^2(\mathbb R_+,\mathbb C)$ be orthogonal to any element of the kernel of $\mathcal P_{\mathcal A}$. Then $f(x)=0$ Lebesgue-a.e.\ for any $x\geq T$. Hence, we have
\begin{align*}
|\mathcal P_{\mathcal A}f|_2^2 &= \sum_{n\in\mathbb N} \int_{nT}^{nT+T} (e^{-\lambda x} e^{\lambda(x-nT)})^2|f(x-nT)|^2dx \\
                           &= \sum_{n\in\mathbb N} e^{-2n\lambda T} \vert f\vert_2^2 \\
                           &= \frac{1}{1-e^{-2\lambda T}} \vert f\vert_2^2
 \end{align*}
 and it follows that $\Vert\mathcal P_{\mathcal A}\Vert_{\mathrm{op}} = \sqrt{\frac{1}{1-e^{-2\lambda T}}}$.
 
According to Lemma~\ref{lem:dual_A}, we have
 \begin{align*}
   e_n^*(x) &= (\mathcal{A}^{-1})^*b_n(x) \\
            &= (1-e^{-2\lambda T})e^{-\lambda x}\left( e^{2\lambda \cut(x)}b_n(\cut(x)) \right) \\
            &= \left(1-e^{-2\lambda T}\right)e^{2\lambda\cut(x)} e_n(x)\,,
 \end{align*}
 for any $n\in\mathbb Z$, $x\geq0$, as required.
\end{proof}

The statements collected in this section have been about the space $L^2(\mathbb R_+,\mathbb C)$ so far. However, we are actually interested in the space $H_\alpha$ which has a natural and simple isometry to $\mathbb C\times L^2(\mathbb R_+,\mathbb C)$. The next corollary will translate the $L^2(\mathbb R_+,\mathbb C)$-statements above to $H_\alpha$. Before stating it, we introduce a notation for later use:
 Define
\begin{equation}
\label{def:Theta-op}
\Theta:H_\alpha\rightarrow \mathbb C\times L^2(\mathbb R_+,\mathbb C), f\mapsto (f(0), w_\alpha f')\,,
\end{equation} 
where $w_\alpha(x):=e^{x\alpha/2}$ for $x\geq 0$. Then $\Theta$ is an isometry of Hilbert spaces. Its inverse is given by
\begin{equation}
\label{def:Theta-inv-op}
\Theta^{-1}:\mathbb C\times L^2(\mathbb R_+,\mathbb C)\rightarrow H_\alpha,(z,f)\mapsto z+\int_0^{(\cdot)} w_{\alpha}^{-1}(y)f(y)dy\,.
\end{equation}
We use these operators to prove:
\begin{cor}\label{k:Riesz basis on H}
The system $\{g_*,\{g_n\}_{n\in\mathbb Z}\}$ defined in \eqref{e:g-star-def}-\eqref{e:g-n-def} is a Riesz basis of a closed subspace $H_\alpha^{T}$ of $H_\alpha$.
Indeed, $H_\alpha^{T}$ is the space generated by 
$\{g_*,\{g_n\}_{n\in\mathbb Z}\}$.
 Moreover, there is a continuous linear projector $\Pi$ with range $H_\alpha^{T}$ and operator norm $\sqrt{\frac{1}{1-e^{-2\lambda T}}}$ such that
    $$ \Pi h(x) = h(x), \quad h\in H_\alpha,x\in[0,T]. $$
   Consequently, $\Pi\U_th(x) = \U_t\Pi h(x) = h(x+t)$ for any $t\in[0,T]$ and any $x\in[0,T-t]$. 
   
  The biorthogonal system $\{g_*^*,\{g_n^*\}_{n\in\mathbb Z}\}$ is given by
   \begin{align*}
      g_*^*(x) &= 1 \\
      g_n^*(x) &= \int_0^x e^{-y\frac{\alpha}{2}} e_n^*(y) dy
   \end{align*}
where $e_n^*$ is given in Proposition \ref{p:Riesz basis on L2} for any $n\in\mathbb Z$, $x\geq0$.
\end{cor}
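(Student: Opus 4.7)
The plan is to transfer the $L^2$-results of Proposition~\ref{p:Riesz basis on L2} to $H_\alpha$ through the isometry $\Theta$ defined in \eqref{def:Theta-op}. The central observation is that $\Theta g_* = (1,0)$ and $\Theta g_n = (0, e_n)$ for every $n \in \mathbb Z$: indeed $g_*(0)=1$, $g_*'\equiv 0$, while $g_n(0)=0$ and, thanks to the definition \eqref{e:lambda-n-def} of $\lambda_n$,
$$
w_\alpha(x)\, g_n'(x) \;=\; e^{x\alpha/2}\cdot T^{-1/2} e^{\lambda_n x} \;=\; T^{-1/2} e^{(2\pi i n/T - \lambda)x} \;=\; e_n(x)\,.
$$
Since $\Theta$ is an isometry of Hilbert spaces and $\{e_n\}_{n\in\mathbb Z}$ is a Riesz basis of the closed subspace $\ran(\mathcal A) \subset L^2(\mathbb R_+,\mathbb C)$, the augmented system $\{g_*\}\cup\{g_n\}_{n\in\mathbb Z}$ is a Riesz basis of the closed subspace $H_\alpha^T := \Theta^{-1}\bigl(\mathbb C \times \ran(\mathcal A)\bigr)$ of $H_\alpha$.

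For the projector I would set $\Pi := \Theta^{-1}\bigl(\Id_{\mathbb C}\oplus \mathcal P_{\mathcal A}\bigr)\Theta$. Idempotence, continuity, and $\ran(\Pi)=H_\alpha^T$ follow directly from the corresponding properties of $\mathcal P_{\mathcal A}$; since $\Theta$ is isometric, $\Vert\Pi\Vert_{\op}=\max(1,\Vert\mathcal P_{\mathcal A}\Vert_{\op})=\sqrt{1/(1-e^{-2\lambda T})}$, the maximum being attained by $\mathcal P_{\mathcal A}$ as $\lambda,T>0$. To verify $\Pi h(x)=h(x)$ on $[0,T]$, I would unpack the definition using \eqref{def:Theta-inv-op} to obtain
$$
\Pi h(x) \;=\; h(0) + \int_0^x w_\alpha^{-1}(y)\,\mathcal P_{\mathcal A}\bigl(w_\alpha h'\bigr)(y)\,dy\,.
$$
Proposition~\ref{p:Riesz basis on L2} guarantees $\mathcal P_{\mathcal A}f(y)=f(y)$ for $y\in[0,T)$ and any $f\in L^2(\mathbb R_+,\mathbb C)$, so for $x\leq T$ the integrand collapses to $h'(y)$, yielding $\Pi h(x)=h(0)+\int_0^x h'(y)\,dy=h(x)$. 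The consequence on $\Pi \U_t$ and $\U_t\Pi$ is then immediate: for $t\in[0,T]$ and $x\in[0,T-t]$, both equal $h(x+t)$, since $x\leq T-t$ means $\Pi\U_t h(x)=(\U_t h)(x)$ and $x+t\leq T$ means $\U_t\Pi h(x)=\Pi h(x+t)=h(x+t)$.

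For the biorthogonal system, I would apply $\Theta$ to the claimed formulas: $\Theta g_*^*=(1,0)$ is trivial, while $(g_n^*)'(x) = e^{-x\alpha/2}e_n^*(x)$ from the integral representation gives $w_\alpha (g_n^*)'=e_n^*$ and hence $\Theta g_n^*=(0,e_n^*)$. Isometry of $\Theta$ then transfers biorthogonality in $\mathbb C\times\ran(\mathcal A)$ to $H_\alpha^T$: $\langle g_n^*,g_m\rangle_\alpha=(e_n^*,e_m)_2=\delta_{nm}$, and the mixed pairings with $g_*$ vanish for the same reason. There is no genuine obstacle here beyond carefully bookkeeping how the change of coordinates $f\mapsto(f(0),w_\alpha f')$ interchanges point evaluation at $0$ with the first coordinate and differentiation with multiplication by $w_\alpha$; everything else is a direct transport of Proposition~\ref{p:Riesz basis on L2} under this isometry.
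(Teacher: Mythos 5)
Your proposal is correct and follows essentially the same route as the paper: transporting Proposition~\ref{p:Riesz basis on L2} through the isometry $\Theta$, defining $\Pi=\Theta^{-1}(\Id\oplus\mathcal P_{\mathcal A})\Theta$, and verifying $\Pi h(x)=h(x)$ on $[0,T]$ by the same unpacking of \eqref{def:Theta-inv-op}. Your treatment of the operator norm (as $\max(1,\Vert\mathcal P_{\mathcal A}\Vert_{\op})$) and of the biorthogonality relations is in fact slightly more explicit than the paper's.
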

\begin{proof}
  Let $\{e_n\}_{n\in\mathbb Z}$ be the Riesz basis from Proposition \ref{p:Riesz basis on L2}, $V$ the linear vector space generated by $\{e_n\}_{n\in\mathbb Z}$ 
(which is in fact $\ran(\mathcal{A})$) and $\mathcal P_{\mathcal{A}}$ the projector from that proposition. Then $\{(1,0),\{(0,e_n)\}_{n\in\mathbb Z}\}$ is a Riesz basis of $\mathbb C\times V$. Furthermore, $\{g_*,\{g_n\}_{n\in\mathbb Z}\}$ is a Riesz basis of $\Theta^{-1}(\mathbb C\times V)$ because $g_*=\Theta^{-1}(1,0)$ and $g_n=\Theta^{-1}(0,e_n)$. Define $\Pi:=\Theta^{-1}(\Id,\mathcal P_{\mathcal{A}})\Theta$. Then $\Pi$ is a linear projector with the same bound as $\mathcal P_{\mathcal A}$ where
$$
(\Id,\mathcal P_{\mathcal A})(z,f):=(z,\mathcal P_{\mathcal A}f),\quad z\in\mathbb C,f\in L^2(\mathbb R_+,\mathbb C)\,. 
$$
Let $h\in H_\alpha$. Observe that for any $x\in [0,T]$, $\mathrm{cut}(y)=y$ when $0\leq y\leq x$. We have from the definition of the various operators that
\begin{align*}
\Pi h(x)&=\Theta^{-1}(\mathrm{Id},\mathcal P_{\mathcal{A}})(h(0),\exp(\alpha\cdot/2)h') (x)\\
&=\Theta^{-1}\left((h(0),(\exp((\lambda+\alpha/2)\cdot)h')\vert_{[0,T)}(\mathrm{cut}(\cdot)\exp(-\lambda\cdot))\right)(x) \\
&=h(0)+\int_0^x e^{-(\lambda+\alpha/2)y} e^{(\lambda+\alpha/2)\mathrm{cut}(y)}
h'(\mathrm{cut}(y))\,dy \\
&=h(0)+\int_0^xh'(y)\,dy=h(x)\,.
\end{align*}
Hence, $\Pi h(x)=h(x)$ for any $x\in[0,T]$.
\end{proof}
We remark in passing that trivially $g_*^*=g_*$.
In the next proposition we compute the action of the shifting semigroup $\{\U_t\}_{t\geq0}$ on the Riesz basis of Corollary \ref{k:Riesz basis on H} and the dual semigroup on the biorthogonal system.
\begin{prop}\label{p:U and the riesz basis}
For the Riesz basis $\{g_*,\{g_n\}_{n\in\mathbb Z}\}$ in \eqref{e:g-star-def}-\eqref{e:g-n-def} and its biorthogonal system $\{g_*^*,\{g_n^*\}_{n\in\mathbb Z}\}$
derived in Corollary~\ref{k:Riesz basis on H}, it holds
   \begin{enumerate}
    \item $\mathcal U_t g_n = e^{\lambda_n t}g_n + g_n(t)g_*$ and
    \item $\mathcal U^*_tg_n^* =  e^{\overline{\lambda_n} t}g_n^*$,
   \end{enumerate}
   for any $n\in\mathbb Z$.
\end{prop}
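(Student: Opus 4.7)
I would handle (1) by a direct algebraic computation and (2) by Riesz basis duality built on top of (1).

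For (1), the plan is to substitute the shift definition together with \eqref{e:g-n-def} and simplify:
\[
\mathcal U_t g_n(x) = g_n(t+x) = \frac{1}{\lambda_n\sqrt T}\bigl(e^{\lambda_n t}e^{\lambda_n x}-1\bigr).
\]
Subtracting $e^{\lambda_n t}g_n(x) = (\lambda_n\sqrt T)^{-1}(e^{\lambda_n t}e^{\lambda_n x}-e^{\lambda_n t})$ leaves the constant $(\lambda_n\sqrt T)^{-1}(e^{\lambda_n t}-1) = g_n(t)$, which equals $g_n(t)g_*(x)$ since $g_*\equiv 1$.

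For (2), I would first observe that (1) gives $\mathcal U_t g_n\in H_\alpha^T$ for every $n$ and $\mathcal U_t g_*=g_*\in H_\alpha^T$, so $\mathcal U_t$ preserves the closed subspace $H_\alpha^T$ and restricts to a bounded operator there. Given $f\in H_\alpha^T$, I would expand $f = \langle f, g_*^*\rangle_\alpha g_* + \sum_{m\in\mathbb Z}\langle f, g_m^*\rangle_\alpha g_m$, apply $\mathcal U_t$ termwise, substitute (1), and regroup by absorbing every $g_m(t)g_*$ contribution into the $g_*$-component. This produces
\[
\mathcal U_t f = \Bigl(\langle f, g_*^*\rangle_\alpha + \sum_{m\in\mathbb Z}\langle f, g_m^*\rangle_\alpha g_m(t)\Bigr) g_* + \sum_{m\in\mathbb Z} e^{\lambda_m t}\langle f, g_m^*\rangle_\alpha g_m.
\]
Reading off the $g_n$-coefficient via the biorthogonal pairing $\langle\cdot, g_n^*\rangle_\alpha$ and using conjugate-linearity of the inner product in the second slot yields $\langle\mathcal U_t f, g_n^*\rangle_\alpha = e^{\lambda_n t}\langle f, g_n^*\rangle_\alpha = \langle f, e^{\overline{\lambda_n}t}g_n^*\rangle_\alpha$ for every $f\in H_\alpha^T$, which is the desired identity $\mathcal U_t^*g_n^* = e^{\overline{\lambda_n}t}g_n^*$.

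The main obstacle will be justifying the regrouping: the series $\sum_{m\in\mathbb Z}\langle f, g_m^*\rangle_\alpha g_m(t)$ must converge absolutely so that the double sum can be split into the two well-defined series above. Since $(\langle f, g_m^*\rangle_\alpha)_{m\in\mathbb Z}\in\ell^2$ by the Riesz basis property and $|g_m(t)| = O(1/|m|)$ as one reads off from \eqref{e:g-n-def} and \eqref{e:lambda-n-def}, Cauchy--Schwarz secures the required absolute convergence. I would also be explicit that $\mathcal U_t^*$ here denotes the adjoint of the restriction $\mathcal U_t|_{H_\alpha^T}$ on the Hilbert space $H_\alpha^T$, which is the natural setting in which the biorthogonal system $\{g_n^*\}$ operates.
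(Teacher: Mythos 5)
Your proof is correct and takes essentially the same route as the paper: part (1) is the same direct substitution into \eqref{e:g-n-def}, and part (2) likewise rests on combining part (1) with the Riesz basis expansion and biorthogonality. The only cosmetic difference is that the paper expands $\mathcal U_t^*g_n^*$ directly in the dual system $\{g_*^*,\{g_k^*\}_{k\in\mathbb Z}\}$ whereas you test $\langle \mathcal U_t f, g_n^*\rangle_\alpha$ against arbitrary $f\in H_\alpha^T$; your extra care about absolute convergence of the regrouped series and about $\mathcal U_t^*$ being the adjoint of the restriction to $H_\alpha^T$ makes explicit what the paper leaves implicit, but it is not a different method.
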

\begin{proof}
Claim (1) follows from a straightforward computation.
For claim (2), we compute
   \begin{align*}
     \mathcal U_t^*g_n^* &= g_*\<\mathcal U_t^*g_n^*,g_*\>_{\alpha} + \sum_{k\in\mathbb Z} g^*_k\<\mathcal U_t^*g_n^*,g_k\>_{\alpha} \\
                     &= g_*\< g_n^*,\mathcal U_tg_*\>_{\alpha} + \sum_{k\in\mathbb Z} g^*_k\<g_n^*,\mathcal U_tg_k\>_{\alpha} \\
                     &= e^{\overline{\lambda_n}t}g_n^*
   \end{align*}
for any $n\in\mathbb Z$, $t\geq 0$. Thus, the Proposition follows.
\end{proof}

A certain Lie commutator plays a crucial role in comparing projected solutions to the SPDE~\eqref{e:HJMM-equation} with solutions to 
the approximation. In the next proposition, we derive the essential results for convergence which will be used in the next
Section to analyse approximations of the SPDE~\eqref{e:HJMM-equation}. 
\begin{prop}\label{l:commutator of U and projectors}
 Let $k\in\mathbb N$, $t\geq0$, $H^{T}_\alpha$ be the closed subspace of $H_\alpha$ generated by the Riesz basis $\{g_*,\{g_n\}_{n\in\mathbb Z}\}$ 
defined in \eqref{e:g-star-def}-\eqref{e:g-n-def} with biorthogonal system 
$\{g_*^*,\{g_n^*\}_{n\in\mathbb Z}\}$ given in Corollary~\ref{k:Riesz basis on H}. 
Define the projection
    $$ \Pi_k:H^{T}_\alpha\rightarrow \text{span}\{g_*,g_{-k},\dots,g_k\},h\mapsto h(0)g_* + \sum_{n=-k}^k g_n\<h,g_n^*\>_{\alpha}, $$
    $c_{k,t}:=\sum_{\vert n\vert >k} g_n(t)g_n^*$, and the operator
    $$ \mathcal C_{k,t}:H^{T}_\alpha\rightarrow \text{span}\{g_*\}, h\mapsto \<h,c_{k,t}\>_{\alpha}g_*.$$
   Then, $\|\Pi_k\|_{\text{op}}$ is bounded uniformly in $k$, $\Pi_kh\rightarrow h$, $\sup_{s\in[0,t]} \Vert\mathcal C_{k,s}h\Vert_\alpha\rightarrow 0$ for $k\rightarrow\infty$ and any $h\in H_\alpha^{T}$, and $[\Pi_k,\mathcal U_t] =\mathcal C_{k,t}$. Here, $[\Pi_k,\mathcal U_t]$ denotes the Lie commutator of $\Pi_k$ and $\mathcal U_t$, that is $[\Pi_k,\mathcal U_t]=\Pi_k\mathcal U_t-\mathcal U_t\Pi_k$. 
   
 Moreover, let $X$ be a stochastic process with values in $H_\alpha^{T}$ 
 such that $X(t)=Y(t)+M(t)$ for some square integrable process $Y$ of finite variation and a square integrable martingale $M$. Then,
   $$ \lim_{k\rightarrow\infty}\int_0^t\mathcal C_{k,t-s}dX(s)=0\,,$$
 where the convergence is in $L^2(\Omega,H_\alpha)$.\footnote{$L^2(\Omega,H_\alpha)$ denotes the space of $H_{\alpha}$-valued random variables $Z$ with $\E[\Vert Z\Vert_\alpha^2]<\infty$.}
\end{prop}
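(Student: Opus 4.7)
The statement bundles three deterministic assertions and one stochastic one. I would treat them in the natural order, with the deterministic claims providing the key estimates used for the stochastic one. The central observation driving everything is that $\mathcal{C}_{k,s}$ is a rank-one operator into $\mathrm{span}\{g_*\}$, so its norm is controlled by $\|c_{k,s}\|_\alpha$, and $c_{k,s}$ is the tail of a Riesz-basis expansion whose coefficients $g_n(s)$ I can bound uniformly in $s$.

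For uniform boundedness of $\Pi_k$ and strong convergence $\Pi_k h\to h$, I would quote standard Riesz basis theory (Young~\cite{Young.80}): since $\{g_*,(g_n)_{n\in\mathbb Z}\}$ is a Riesz basis of $H_\alpha^T$ with biorthogonal system $\{g_*^*,(g_n^*)_{n\in\mathbb Z}\}$, every $h\in H_\alpha^T$ has the unconditionally convergent expansion $h=h(0)g_*+\sum_n \<h,g_n^*\>_\alpha g_n$ with the coefficient sequence in $\ell^2$ and squared $\ell^2$-norm comparable to $\|h\|_\alpha^2$. Both uniform boundedness of the partial-sum projector and norm convergence of $\Pi_k h$ follow immediately. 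For the commutator identity, I would test on basis elements: Proposition~\ref{p:U and the riesz basis} gives $\mathcal U_t g_n=e^{\lambda_n t}g_n+g_n(t)g_*$, and $\mathcal U_t g_*=g_*$ since $g_*\equiv 1$; combining with $\Pi_k g_*=g_*$, $\Pi_k g_n=g_n$ for $|n|\le k$ and $\Pi_k g_n=0$ otherwise, direct computation shows $[\Pi_k,\mathcal U_t]$ vanishes on $g_*$ and on $g_n$ with $|n|\le k$, and equals $g_n(t)g_*$ on $g_n$ with $|n|>k$, matching the action of $\mathcal C_{k,t}$ on a basis.

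For the uniform decay $\sup_{s\in[0,t]}\|\mathcal C_{k,s}h\|_\alpha\to 0$, I observe $\|\mathcal C_{k,s}h\|_\alpha\le\|h\|_\alpha\,\|c_{k,s}\|_\alpha$ (Cauchy--Schwarz and $\|g_*\|_\alpha=1$), so it suffices to prove $\sup_s\|c_{k,s}\|_\alpha\to 0$. The key input is the $s$-uniform bound $|g_n(s)|\le 2/(|\lambda_n|\sqrt T)\le T^{1/2}/(\pi|n|)$ for $n\ne 0$, which holds because $\mathrm{Re}\,\lambda_n=-\lambda-\alpha/2<0$ gives $|e^{\lambda_n s}|\le 1$ for all $s\ge 0$, and $|\lambda_n|\ge 2\pi|n|/T$. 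Using the upper Riesz basis constant for $\{g_n^*\}$, one gets $\sup_{s\in[0,t]}\|c_{k,s}\|_\alpha^2\le C\sum_{|n|>k}1/n^2\to 0$.

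For the stochastic integral, I would decompose $X=Y+M$ and exploit the rank-one factorisation $\mathcal C_{k,r}h=\<h,c_{k,r}\>_\alpha g_*$ to collapse the vector-valued integral to a scalar one:
\[
\int_0^t\mathcal C_{k,t-s}\,dX(s)=\Big(\int_0^t\<dX(s),c_{k,t-s}\>_\alpha\Big)g_*.
\]
For the finite-variation part, a pathwise Stieltjes bound gives $\bigl|\int_0^t\<dY(s),c_{k,t-s}\>_\alpha\bigr|\le\sup_{r\le t}\|c_{k,r}\|_\alpha\cdot|Y|(t)$, and squaring plus square integrability of $|Y|(t)$ yields $L^2$-convergence to zero. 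For the martingale part, the It\^o isometry for scalar stochastic integrals against the $H_\alpha$-valued square-integrable martingale $M$ with operator-valued angle bracket $\langle\!\langle M\rangle\!\rangle$ gives
\[
\mathbb E\Big[\big|\textstyle\int_0^t\<dM(s),c_{k,t-s}\>_\alpha\big|^2\Big]=\mathbb E\Big[\int_0^t\<d\langle\!\langle M\rangle\!\rangle(s)c_{k,t-s},c_{k,t-s}\>_\alpha\Big]\le\sup_{r\le t}\|c_{k,r}\|_\alpha^2\cdot\mathbb E[\|M(t)\|_\alpha^2],
\]
which tends to zero. Adding the two bounds gives the claimed $L^2$-limit.

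The main obstacle I anticipate is not in any single inequality but in the bookkeeping for the martingale term: ensuring that the right form of the It\^o isometry is invoked for a general complex, infinite-dimensional square-integrable martingale. The rank-one structure of $\mathcal C_{k,r}$ bypasses Hilbert--Schmidt estimates and reduces everything to scalar second-moment bounds against $\mathbb E[\|M(t)\|_\alpha^2]$ and $\mathbb E[|Y|(t)^2]$, both of which are at our disposal.
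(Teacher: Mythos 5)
Your proposal is correct and its overall architecture coincides with the paper's: the same Riesz-basis expansion for $\Pi_kh\to h$ and the uniform operator-norm bound, the same verification of $[\Pi_k,\mathcal U_t]=\mathcal C_{k,t}$ on the basis vectors $g_*$ and $g_n$ via Proposition~\ref{p:U and the riesz basis}, and the same second-moment estimate for the stochastic integral (your rank-one reduction to a scalar integral is exactly what the paper's trace computation $\tr(\mathcal C_{k,t-s}Q_s\mathcal C_{k,t-s}^*)=\<Q_sc_{k,t-s},c_{k,t-s}\>_\alpha$ amounts to, after identifying $\mathcal C_{k,t}^*g_*=c_{k,t}$). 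The one place where you genuinely diverge is the claim $\sup_{s\in[0,t]}\|\mathcal C_{k,s}h\|_\alpha\to0$: the paper identifies $\<h,c_{k,s}\>_\alpha$ with the pointwise evaluation $(h-\Pi_kh)(s)$ and invokes the embedding of $H_\alpha$ into the space of locally uniformly convergent functions (Benth--Kr\"uhner, Lemma~3.2), whereas you use Cauchy--Schwarz together with the $s$-uniform tail bound $|g_n(s)|\le 2/(|\lambda_n|\sqrt{T})$ to show $\sup_s\|c_{k,s}\|_\alpha^2=O(1/k)$. Your route is self-contained, quantitative, and in fact reproduces the paper's later Lemma~\ref{lemma:approximation speed_ckt}; the paper's route avoids needing the biorthogonal family $\{g_n^*\}$ to be a Riesz basis in its own right (a standard fact you correctly rely on for the upper bound on $\|c_{k,s}\|_\alpha$). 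Both are sound, and the remaining bookkeeping (the Stieltjes bound for the finite-variation part, the angle-bracket isometry for the martingale part) matches the paper's.
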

\begin{proof}
 Let $h\in H^T_\alpha$. Since $\{g_*,\{g_n\}_{n\in\mathbb Z}\}$ is a Riesz basis of $H^T_\alpha$ we have
$$ 
h = g_*\<h,g_*\>_{\alpha} + \sum_{n\in\mathbb Z} g_n\<h,g_n^*\>_{\alpha}\,,
$$
and hence we get $\Pi_kh\rightarrow h$ for $k\rightarrow \infty$. 

We prove that the operator norm of $\Pi_k$ is uniformly bounded in $k\in\mathbb{N}$. 
Recall from Corollary~\ref{k:Riesz basis on H} and \eqref{def:Theta-inv-op} 
$g_n=\Theta^{-1}(0,\mathcal{A}b_n), n\in\mathbb{Z}$ and $g_*=\Theta^{-1}(1,0)$, where
$\mathcal{A}$ is defined in \eqref{def-A-operator} and $\{b_n\}_{n\in\mathbb{Z}}$ is 
an orthonormal basis of $L^2([0,T],\mathbb{C})$. Without loss of generality, we assume 
$h(0)=0$ for $h\in H_{\alpha}^T$, and find that 
$$
\Pi_kh=\sum_{n=-k}^kg_n\langle h,g_n^*\rangle_{\alpha}=\sum_{n=-k}^k
\mathcal{T}b_n(\mathcal{T}^{-1}h,b_n)_2=\mathcal {T}\sum_ {n=-k}^kb_n(\mathcal{T}^{-1}h,b_n)_2\,.
$$
Here, $\mathcal{T}f:=\Theta^{-1}(0,\mathcal{A}f)\in H_{\alpha}$ for 
$f\in L^2([0,T],\mathbb{C})$, which is a bounded linear operator. Hence,
since $\sum_{n=-k}^kb_n(\mathcal{T}^{-1}h,b_n)_2$ is the projection of 
$\mathcal{T}^{-1}h\in L^2([0,T],\mathbb{C})$ down to its first $2k+1$ coordinates, 
$$
\|\Pi_kh\|_{\alpha}\leq|\mathcal{T}\|_{\text{op}}\left|\sum_{n=-k}^kb_n(\mathcal{T}^{-1}h,b_n)_2\right|_2\leq\|\mathcal{T}\|_{\text{op}}|\mathcal{T}^{-1}h|_2
$$ 
But since $\mathcal{T}^{-1}$ also is a bounded operator, it follows that 
$\|\Pi_k\|_{\text{op}}\leq\|\mathcal{T}\|_{\text{op}}\|\mathcal{T}^{-1}\|_{\text{op}}$.

 Benth and Kr\"uhner~\cite[Lemma 3.2]{benth.kruehner.14} yields that convergence in $H_\alpha$ implies local uniform convergence. Thus, as we know $h-\Pi_kh \rightarrow 0$,
it holds
   $$ \sup_{s\in[0,t]} \vert h(s)-\Pi_{k}h(s)\vert \rightarrow 0\,, $$
  for $k\rightarrow \infty$. Hence, we find 
   $$\sup_{s\in[0,t]}\left\vert\sum_{\vert n\vert>k}g_n(s)\<h,g^*_n\>_{\alpha}\right\vert = \sup_{s\in[0,t]} \vert h(s)-\Pi_{k}h(s)\vert \rightarrow 0\,, $$
   for $k\rightarrow\infty$. Therefore, $\sup_{s\in[0,t]}\Vert \C_{k,s}h\Vert_\alpha\rightarrow 0$ for $k\rightarrow\infty$.
   
   Let $n\in\mathbb Z$. Then, by Proposition~\ref{p:U and the riesz basis}
   \begin{align*}
     [\Pi_k,\U_t]g_n &= \Pi_k(e^{\lambda_n t}g_n + g_n(t)g_*) - 1_{\{\vert n\vert\leq k\}}\U_tg_n \\
                     &= 1_{\{\vert n\vert\leq k\}}e^{\lambda_n t}g_n + g_n(t)g_* - 1_{\{\vert n\vert\leq k\}}(e^{\lambda_n t}g_n + g_n(t)g_*) \\
                     &= 1_{\{\vert n\vert>k\}}g_n(t)g_* \\
                     &= \C_{k,t}g_n
   \end{align*}
   for any $t\geq0$. Moreover,
   \begin{align*}
     [\Pi_k,\U_t]g_* = \Pi_kg_* - \U_tg_* = 0 = \C_{k,t}g_*.
   \end{align*}
   
   Let $\<\<M,M\>\>(t) = \int_0^t Q_sd\<M,M\>(s)$ be the quadratic variation processes of the martingale $M$ given in 
Peszat and 
Zabczyk~\cite[Theorem 8.2]{peszat.zabczyk.07}\footnote{In Peszat and Zabczyk~\cite{peszat.zabczyk.07}, $\<\<\cdot,\cdot\>\>$ is called the operator angle bracket process, while $\<\cdot,\cdot\>$ is the angle bracket process.}. Then, Peszat and Zabczyk~\cite[Theorem 8.7(ii)]{peszat.zabczyk.07} yields
   \begin{align*}
 \E\left(\Vert \int_0^t \C_{k,t-s}dM(s) \Vert_\alpha^2\right) &= \E \int_0^t \tr(\C_{k,t-s}Q_s\C^*_{k,t-s})d\<M,M\>(s)\,.    
   \end{align*}
Recall that for $h\in H_{\alpha}^T$, we find $\C_{k,t}h=\<h,c_{k,t}\>_{\alpha}g_*$. Thus,
$$
\<h,\C_{k,t}^*g_*\>_{\alpha}=\<\C_{k,t}h,g_*\>_{\alpha}=\<h,c_{k,t}\>_{\alpha}\,,
$$
which gives that $\C_{k,t}^*g_*=c_{k,t}$. For $g\in H_\alpha^T$ orthogonal to $g_*$ we have
  $$ \<h,\C_{k,t}^*g\>_\alpha =\<\C_{k,t}h,g\>_{\alpha}=\<h,c_{k,t}\>_{\alpha}\<g_*,g\>_{\alpha}= 0 $$
 for any $h\in H_\alpha^T$ and hence $\C_{k,t}^*g=0$. We get
\begin{align*}
\tr(\C_{k,t-s}Q_s\C^*_{k,t-s}) &= \<\C_{k,t-s}Q_s\C_{k,t-s}^*g_*,g_*\>_{\alpha} \\
&=\<Q_sc_{k,t-s},c_{k,t-s}\>_{\alpha} \\
&\leq \Vert c_{k,t-s}\Vert_\alpha^2 \tr(Q_s)\,. 
\end{align*}
Hence,
  \begin{align*}
 \E\left(\left\Vert \int_0^t \C_{k,t-s}dM(s) \right\Vert_\alpha^2\right) &= \E \int_0^t \tr(\C_{k,t-s}Q_s\C^*_{k,t-s})d\<M,M\>(s) \\
               &\leq \sup_{s\in[0,t]}\Vert c_{k,s}\Vert_\alpha^2  \E\left( \int_0^t\tr(Q_s)d\<M,M\>(s) \right) \\
               &= \sup_{s\in[0,t]}\Vert c_{k,s}\Vert_\alpha^2 \E\left(\Vert M(t)-M(0)\Vert_\alpha^2\right) \\
               &\rightarrow 0
   \end{align*}
   for $k\rightarrow\infty$. Similarily, we get
   $$ \left\Vert \int_0^t \C_{k,t-s} dY(s) \right\Vert_\alpha^2 \leq \sup_{s\in[0,t]} \Vert c_{k,s}\Vert_\alpha^2\left(\int_0^t \Vert dY\Vert_\alpha(s)\right)^2 \rightarrow 0$$
   as $k\rightarrow 0$, where $\Vert dY\Vert_\alpha$ denotes the total variation measure associated with $dY$ (see Dinculeanu~\cite[Definition \S 2.1]{dinculeanu.00}). The claim follows.
\end{proof}
The projection operator $\Pi_k$ plays an important role in the arbitrage-free approximation of the forward term structure. For notational convenience,
we denote 
\begin{equation}
\label{def-H-k-space}
H_{\alpha}^{T,k}:=\text{span}\{g_*,g_{-k},\dots,g_k\}\,,
\end{equation}
for any $k\in\mathbb N$. From the above considerations, we have that $\Pi_k$ projects the space $H_{\alpha}^T$ down to $H_{\alpha}^{T,k}$. 

Our next aim is to identify the convergence speed of approximations in $H_{\alpha}^{T,k}$ of certain smooth
elements $f\in H_{\alpha}^T$, that is, how close is $\Pi_kf$ to $f$ in terms
of number of Riesz basis functions. We show a couple of technical results first.
\begin{cor}\label{k:distance to ONB}
  Let $f\in H_\alpha^T$. Then, we have
   \small{$$ \frac{e^{-2\lambda T}}{1-e^{-2\lambda T}} \left(|f(0)|^2 + \sum_{n\in\mathbb Z} \vert\<f,g_n^*\>_{\alpha}\vert^2\right)\leq \Vert f\Vert_\alpha^2 \leq \frac{1}{1-e^{-2\lambda T}} \left(|f(0)|^2 + \sum_{n\in\mathbb Z} \vert\<f,g_n^*\>_{\alpha}\vert^2\right)\,.$$}
\end{cor}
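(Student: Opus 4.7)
The plan is to reduce the estimate to Lemma~\ref{l:stetige Einbettung} by transferring the computation to $L^2([0,T),\mathbb{C})$ via the Hilbert-space isometry $\Theta$ from \eqref{def:Theta-op}, and then identifying the coefficients $\<f,g_n^*\>_\alpha$ with the Fourier coefficients of a suitable element of $L^2([0,T),\mathbb{C})$.

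First I would record the identities $\Theta g_*=(1,0)$, $\Theta g_n=(0,e_n)$, and $\Theta g_n^*=(0,e_n^*)$; these all follow directly from \eqref{e:g-star-def}--\eqref{e:g-n-def} and the expression for $g_n^*$ in Corollary~\ref{k:Riesz basis on H} upon computing the derivative and multiplying by $w_\alpha$. Since $\Theta$ is an isometry, this immediately gives $\|f\|_\alpha^2=|f(0)|^2+|w_\alpha f'|_2^2$ and $\<f,g_n^*\>_\alpha=(w_\alpha f',e_n^*)_2$. For $f\in H_\alpha^T$ the component $w_\alpha f'$ lies in $\ran(\mathcal{A})$, so I can write $w_\alpha f'=\mathcal{A}g$ for a unique $g\in L^2([0,T),\mathbb{C})$.

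Next, using the formula $e_n^*=(\mathcal{A}^{-1})^*b_n$ from Proposition~\ref{p:Riesz basis on L2}, the adjoint relation gives
\[
\<f,g_n^*\>_\alpha=(\mathcal{A}g,(\mathcal{A}^{-1})^*b_n)_2=(g,b_n)_2,
\]
identifying $\<f,g_n^*\>_\alpha$ as the Fourier coefficient of $g$ with respect to the orthonormal basis $\{b_n\}_{n\in\mathbb{Z}}$ of $L^2([0,T),\mathbb{C})$. Parseval's identity then yields $|g|_2^2=\sum_{n\in\mathbb{Z}}|\<f,g_n^*\>_\alpha|^2$, and applying Lemma~\ref{l:stetige Einbettung} to $\mathcal{A}g$ produces the sandwich
\[
\frac{e^{-2\lambda T}}{1-e^{-2\lambda T}}\sum_{n\in\mathbb{Z}}|\<f,g_n^*\>_\alpha|^2\leq|w_\alpha f'|_2^2\leq\frac{1}{1-e^{-2\lambda T}}\sum_{n\in\mathbb{Z}}|\<f,g_n^*\>_\alpha|^2.
\]

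Adding $|f(0)|^2$ to all three members puts $\|f\|_\alpha^2$ in the middle; then using $\tfrac{1}{1-e^{-2\lambda T}}\geq 1$ on the right and the analogous rearrangement on the left delivers the claimed symmetric form. No serious obstacle is expected here: all computations are routine given the preparatory results in this section, and the only mildly delicate point is the bookkeeping needed to absorb the free $|f(0)|^2$-term into the prefactor that multiplies both $|f(0)|^2$ and the sum on each side.
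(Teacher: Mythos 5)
Your proof follows essentially the same route as the paper's: both reduce the claim to Lemma~\ref{l:stetige Einbettung} by passing through the isometry $\Theta$ and identifying the coefficients $\<f,g_n^*\>_{\alpha}$ with Fourier coefficients relative to the orthonormal basis $\{b_n\}$ of $L^2([0,T),\mathbb C)$. The paper packages this as an auxiliary operator $\mathcal B$ on an extended Hilbert space with $\Vert\mathcal B\Vert_{\mathrm{op}}^2\le (1-e^{-2\lambda T})^{-1}$, whereas your adjoint computation $\<f,g_n^*\>_{\alpha}=(\mathcal Ag,(\mathcal A^{-1})^*b_n)_2=(g,b_n)_2$ followed by Parseval is a cleaner way of saying the same thing; your upper bound is complete since $(1-e^{-2\lambda T})^{-1}\ge 1$. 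The one step you flag as delicate is, however, genuinely where the argument is loose: absorbing the free $|f(0)|^2$ into the lower-bound prefactor requires $e^{-2\lambda T}/(1-e^{-2\lambda T})\le 1$, i.e.\ $2\lambda T\ge\ln 2$, which is not guaranteed for arbitrary $\lambda,T>0$. This is not a defect of your write-up alone --- the paper's own proof (``the lower inequality simply uses the lower inequality of Lemma~\ref{l:stetige Einbettung} instead'') glosses over exactly the same point, and indeed the stated lower bound fails for $f=g_*$ (where the sum vanishes, $|f(0)|^2=1$ and $\Vert g_*\Vert_\alpha^2=1$) whenever $e^{-2\lambda T}>1/2$. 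So your proposal matches the paper's proof both in substance and in its one weakness; a fully correct version would either assume $2\lambda T\ge\ln 2$ or replace the lower constant by $\min\{1,e^{-2\lambda T}/(1-e^{-2\lambda T})\}$.
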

\begin{proof}
  Corollary~\ref{k:Riesz basis on H} states that $\{g_*,\{g_n\}_{n\in\mathbb Z}\}$ is a Riesz basis of $H_\alpha^T$. Moreover, it is given by $g_*=\Theta^{-1}(1,0)$, $g_n=\Theta^{-1}(0,e_n)$ for any $n\in\mathbb Z$ where $\Theta$ is the isometry given in \eqref{def:Theta-inv-op} and $\{e_n\}_{n\in\mathbb Z}$ is the Riesz basis given in Proposition \ref{p:Riesz basis on L2}. Moreover, Lemma~\ref{l:stetige Einbettung} yields that $e_n=\mathcal Ab_n$ for any $n\in\mathbb Z$ where $\{b_n\}_{n\in\mathbb Z}$ is an orthonormal basis of 
$L^2([0,T],\mathbb C)$ and $\Vert \mathcal A\Vert_{\mathrm{op}}^2\leq \frac{1}{1-e^{-2\lambda T}}$. Thus, we can construct a Hilbert space with orthonormal basis $\{b_*,\{b_n\}_{n\in\mathbb Z}\}$ and a bounded linear operator
$\mathcal B$ with $\Vert \mathcal B\Vert_{\mathrm{op}}^2\leq \frac{1}{1-e^{-2\lambda T}}$, such that $g_*=\mathcal Bb_*$, $g_n=\mathcal Bb_n$. 
Thus, we have
   \begin{align*}
      \Vert f\Vert_\alpha^2  &= \Vert g_*\<f,g_*\>_{\alpha} + \sum_{n\in\mathbb Z}g_n\<f,g_n^*\>_{\alpha} \Vert_\alpha^2 \\
                             &= \Vert \mathcal Bb_*\<f,g_*\>_{\alpha} + \sum_{n\in\mathbb Z}\mathcal Bb_n\<f,g_n^*\>_{\alpha} \Vert_\alpha^2 \\
                             &\leq \frac{1}{1-e^{-2\lambda T}} \left(|\<f,g_*\>_{\alpha}|^2 + \sum_{n\in\mathbb Z}|\<f,g_n^*\>_{\alpha}|^2 \right) \\
   \end{align*}
   where $\{g_*,\{g^*_n\}_{n\in\mathbb Z}\}$ denotes the biorthogonal system to $\{g_*,\{g_n\}_{n\in\mathbb Z}\}$ given in Corollary~\ref{k:Riesz basis on H}.
The lower inequality simply uses the lower inequality of Lemma~\ref{l:stetige Einbettung} instead.
\end{proof}
The next technical result connects the inner product of elements in $H_{\alpha}^T$ with the 
biorthogonal basis functions to a simple
Fourier-like integral on $[0,T]$:
\begin{cor}
\label{cor:alpha-2-inner-prod}
Assume $f\in H_{\alpha}^T$. Then, for any $n\in\mathbb Z$, 
$$
\<f,g_n^*\>_{\alpha}=(1-e^{-2\lambda T})^{-1}T^{-1/2}\int_0^Tf'(x)\exp\left((-\frac{2\pi i}{T}n-\lambda+\frac{\alpha}{2})x\right)\,dx
$$
\end{cor}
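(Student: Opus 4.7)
The plan is to compute $\langle f,g_n^*\rangle_\alpha$ directly from the definition of the Filipovi\'c inner product, using two ingredients: the explicit formula for $g_n^*$ obtained by combining Corollary~\ref{k:Riesz basis on H} with Proposition~\ref{p:Riesz basis on L2}, and the characterization of $H_\alpha^T$ via the range of the operator $\mathcal{A}$.

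First I would expand
$$
\langle f,g_n^*\rangle_\alpha=f(0)\overline{g_n^*(0)}+\int_0^\infty f'(x)\overline{(g_n^*)'(x)}\,e^{\alpha x}\,dx.
$$
Since $g_n^*(x)=\int_0^x e^{-y\alpha/2}e_n^*(y)\,dy$, the boundary term vanishes. Differentiating and substituting $e_n^*(x)=(1-e^{-2\lambda T})e^{2\lambda\,\cut(x)}e_n(x)$ together with the explicit form of $e_n$ collapses the weight $e^{\alpha x}$ against $e^{-x\alpha/2}$ and produces an integrand containing the factor $f'(x)\,e^{x\alpha/2}\,e^{2\lambda\,\cut(x)}\exp\bigl((-2\pi in/T-\lambda)x\bigr)$.

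Next I would exploit $f\in H_\alpha^T$. By the construction $H_\alpha^T=\Theta^{-1}(\mathbb{C}\times\ran(\mathcal A))$ in Corollary~\ref{k:Riesz basis on H}, the function $e^{x\alpha/2}f'(x)$ lies in $\ran(\mathcal A)$, so there exists $h\in L^2([0,T),\mathbb{C})$ with $e^{x\alpha/2}f'(x)=e^{-\lambda x}h(\cut(x))$; evaluating on $[0,T)$ identifies $h(y)=e^{(\lambda+\alpha/2)y}f'(y)$. Substituting this identity lets the factor $e^{x\alpha/2}f'(x)$ disappear in favour of $e^{-\lambda x}h(\cut(x))$.

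The heart of the argument is then to split $\int_0^\infty=\sum_{k\ge 0}\int_{kT}^{(k+1)T}$ and change variable $y=x-kT$ in each piece. The factor $e^{2\lambda\,\cut(x)}$ becomes $e^{2\lambda y}$ and combines with the remaining exponentials so that the $x$-dependence inside each interval reduces to $\exp(-\tfrac{2\pi in}{T}y)$ times an overall constant $e^{-2\pi ink}e^{-2\lambda kT}=e^{-2\lambda kT}$. The resulting geometric series $\sum_{k\ge 0}e^{-2\lambda kT}=(1-e^{-2\lambda T})^{-1}$ combines with the prefactor $(1-e^{-2\lambda T})$ inherited from $e_n^*$, and one is left with a single Fourier-like integral of $h$ on $[0,T]$ which, after re-expressing $h$ in terms of $f'$, gives the stated formula.

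The main obstacle is purely bookkeeping: tracking the five exponential factors $e^{\alpha x}$, $e^{-x\alpha/2}$, $e^{2\lambda\,\cut(x)}$, $e^{-\lambda x}$ and $e^{(-2\pi in/T-\lambda)x}$ through the split-integral/substitution so that $\cut$ disappears and only a clean periodic exponential survives, producing the geometric-series cancellation. No genuine analytic difficulty arises beyond Fubini, which is justified because $e^{x\alpha/2}f'\in L^2(\mathbb{R}_+,\mathbb{C})$ and the remaining factors decay like $e^{-2\lambda x}$.
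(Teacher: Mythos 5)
Your method --- expanding $\langle f,g_n^*\rangle_\alpha$ from the definition of the inner product, inserting $(g_n^*)'=e^{-\cdot\alpha/2}e_n^*$ with the explicit form of $e_n^*$, using that $e^{\cdot\alpha/2}f'\in\ran(\mathcal A)$, and then periodizing --- is sound, and it is essentially the paper's argument unrolled (the paper routes the same computation through the unitary $\Theta$ and the identity $(h,g)_2=(1-e^{-2\lambda T})^{-1}\int_0^Th\overline g\,dx$ on $\ran(\mathcal A)$). The problem is your final sentence: the bookkeeping you defer does \emph{not} return the stated formula. Carrying out your own steps, the integrand is $(1-e^{-2\lambda T})\,T^{-1/2}f'(x)e^{\alpha x/2}e^{2\lambda\cut(x)}e^{(-2\pi in/T-\lambda)x}$; writing $e^{\alpha x/2}f'(x)=e^{-\lambda x}h(\cut(x))$ with $h(y)=e^{(\lambda+\alpha/2)y}f'(y)$, on $[kT,(k+1)T)$ the two factors $e^{-\lambda x}$ combine with $e^{2\lambda\cut(x)}$ to leave $e^{-2\lambda kT}$, the geometric series cancels the prefactor $(1-e^{-2\lambda T})$, and you are left with
$$
\langle f,g_n^*\rangle_\alpha=T^{-1/2}\int_0^Tf'(x)\exp\Bigl(\bigl(-\tfrac{2\pi i}{T}n+\lambda+\tfrac{\alpha}{2}\bigr)x\Bigr)\,dx,
$$
i.e.\ with $+\lambda$ in the exponent and \emph{no} factor $(1-e^{-2\lambda T})^{-1}$.

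So the gap is not in your method but in the claim that it yields the displayed identity: that identity is false as stated. Sanity check with $f=g_m$: biorthogonality forces $\langle g_m,g_n^*\rangle_\alpha=\delta_{mn}$, and the corrected formula above reproduces this, whereas the stated formula gives $1/(2\lambda T)$ for $m=n$ and a nonzero value for $m\neq n$. The paper's own proof reaches the stated expression only by identifying $g_n^*$ with $\Theta^*(0,e_n)$ instead of $\Theta^{*}(0,e_n^*)=\Theta^{-1}(0,e_n^*)$, which is what Corollary~\ref{k:Riesz basis on H} actually defines; pairing $e^{\alpha\cdot/2}f'$ against $e_n$ rather than against its biorthogonal partner $e_n^*$ is exactly what produces the spurious $-\lambda$ and the $(1-e^{-2\lambda T})^{-1}$. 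If you finish your computation honestly you will obtain the corrected identity, not the one in the statement, so as a proof of the Corollary as written the proposal necessarily fails at its last step.
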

\begin{proof}
First, recall that $g_n^*=\Theta^*(0,e_n)$ for $n\in\mathbb Z$, where $\Theta$ is defined 
in the \eqref{def:Theta-inv-op}. Thus,
\begin{align*}
\<f,g_n^*\>&=\<f,\Theta^*(0,e_n)\>_{\alpha} \\
&=(\Theta f,(0,e_n))_{\mathbb C\times L^2(\mathbb R_+)} \\
&=((f(0),e^{\alpha\cdot/2}f'),(0,e_n))_{\mathbb C\times L^2(\mathbb R_+)} \\
&=(e^{\alpha\cdot/2}f',e_n)_2\,.
\end{align*}
Note that $\exp(\alpha\cdot/2)f'$ and $e_n=\mathcal A b_n$ are elements of $\mathrm{ran}(\mathcal A)$. 
If $h\in\mathrm{ran}(\mathcal{A})$, then there exists a $\hat{h}\in L^2([0,T],\mathbb C)$ such that
$h=\mathcal A \hat{h}$, or, $h(x)=\exp(-\lambda x)\hat{h}(\mathrm{cut}(x))$. Observe that 
for $x\in[0,T]$, $\hat{h}(x)=\exp(\lambda x)h(x)$. Then, if $g\in\mathrm{ran}(\mathcal{A})$, we find
\begin{align*}
(h,g)_2&=\int_0^{\infty}h(x)\overline{g(x)}\,dx \\
&=\int_0^{\infty}e^{-2\lambda x}\hat{h}(\mathrm{cut}(x))\overline{\hat{g}(\mathrm{cut}(x)}\,dx \\
&=\sum_{n=0}^{\infty}e^{-2\lambda n T}\int_{nT}^{(n+1)T}e^{-2\lambda(x-nT)}\hat{h}(\mathrm{cut}(x))
\overline{\hat{g}(\mathrm{cut}(x))}\,dx \\
&=\sum_{n=0}^{\infty}e^{-2\lambda n T}\int_0^Te^{-2\lambda x}\hat{h}(x)\overline{\hat{g}(x)}\,dx \\
&=(1-e^{-2\lambda T})^{-1}\int_0^Th(x)\overline{g(x)}\,dx\,.
\end{align*}
Thus,
\begin{align*}
\<f,g_n^*\>&=(1-e^{-2\lambda T})^{-1}\int_0^Te^{\alpha x/2}f'(x)\overline{e_n(x)}\,dx \\
&=(1-e^{-2\lambda T})^{-1}T^{-1/2}\int_0^Tf'(x)\exp\left((-\frac{2\pi i}{T}n-\lambda+\frac{\alpha}{2})x\right)\,dx
\end{align*}
Hence, the result follows.
\end{proof}

With this results at hand, we can prove a convergence rate of order $1/k$ for sufficiently smooth functions in $H_{\alpha}^T$.
\begin{prop}\label{p:approximation speed}
Assume $f\in H_\alpha^{T}$ is such that $f\vert_{[0,T]}$ is twice continuously differentiable. 
 Then, we have
$$ 
\left\Vert f - \Pi_kf\right\Vert^2_\alpha \leq\frac{C_1}{k}\,,
$$
for any $k\in\mathbb N$, where
$$
C_1=\frac{T\left\vert f'(T)e^{T(-\lambda+\alpha/2)}-f'(0)\right\vert^2+(\int_0^T \vert f''(x)\vert e^{x(-\lambda+\alpha/2)}\,dx)^2}{\pi^2(1-e^{-2\lambda T})^3}\,,
$$  
and we recall the projection operator $\Pi_k$ from Proposition~\ref{l:commutator of U and projectors}.
\end{prop}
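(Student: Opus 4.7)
The plan is to reduce the claim to the convergence of a Fourier-type series obtained from Corollary~\ref{cor:alpha-2-inner-prod}, and then apply integration by parts twice, exploiting the fact that $f\in C^2([0,T])$.

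First I would verify that $f-\Pi_kf$ satisfies $(f-\Pi_k f)(0)=0$. Indeed, since $g_*(0)=1$ and $g_n(0)=0$ for all $n\in\mathbb Z$, the definition of $\Pi_k$ in Proposition~\ref{l:commutator of U and projectors} gives $(\Pi_kf)(0)=f(0)$. Moreover, by biorthogonality $\langle g_m,g_n^*\rangle_\alpha=\delta_{mn}$ and $\langle g_*,g_n^*\rangle_\alpha=0$, so
$$\langle f-\Pi_kf,g_n^*\rangle_\alpha = \begin{cases} 0, & |n|\leq k,\\ \langle f,g_n^*\rangle_\alpha, & |n|>k.\end{cases}$$
Applying the upper bound in Corollary~\ref{k:distance to ONB} to $f-\Pi_kf\in H_\alpha^T$ therefore yields
$$\|f-\Pi_kf\|_\alpha^2 \;\leq\; \frac{1}{1-e^{-2\lambda T}}\sum_{|n|>k}|\langle f,g_n^*\rangle_\alpha|^2.$$

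Next I would plug in the explicit Fourier-type expression from Corollary~\ref{cor:alpha-2-inner-prod} and integrate by parts. With $\mu_n:=-\tfrac{2\pi i n}{T}-\lambda+\tfrac{\alpha}{2}$ and noting the crucial identity $e^{\mu_n T}=e^{T(-\lambda+\alpha/2)}$ (the exponential of $-2\pi i n$ is $1$), one gets
$$\int_0^T f'(x)e^{\mu_n x}\,dx \;=\; \frac{1}{\mu_n}\bigl(f'(T)e^{T(-\lambda+\alpha/2)}-f'(0)\bigr) \;-\;\frac{1}{\mu_n}\int_0^T f''(x)e^{\mu_n x}\,dx.$$
The boundary term is exactly the quantity $A:=f'(T)e^{T(-\lambda+\alpha/2)}-f'(0)$ appearing in $C_1$, and the remaining integral is bounded by $b:=\int_0^T|f''(x)|e^{x(-\lambda+\alpha/2)}\,dx$ since $|e^{\mu_n x}|=e^{x(-\lambda+\alpha/2)}$. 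Hence
$$|\langle f,g_n^*\rangle_\alpha|^2 \;\leq\; \frac{(|A|+b)^2}{(1-e^{-2\lambda T})^2 T\,|\mu_n|^2}.$$

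Finally I would use the fact that $|\mu_n|^2=(\lambda-\alpha/2)^2+4\pi^2 n^2/T^2\geq 4\pi^2 n^2/T^2$, together with the elementary tail estimate $\sum_{|n|>k}n^{-2}\leq 2/k$, to conclude
$$\sum_{|n|>k}|\langle f,g_n^*\rangle_\alpha|^2 \;\leq\; \frac{T(|A|+b)^2}{2\pi^2(1-e^{-2\lambda T})^2\,k}.$$
Substituting back and bounding $(|A|+b)^2\leq 2|A|^2+2b^2$ (or handling the two pieces separately, using Parseval's identity in $L^2([0,T])$ for the summed Fourier coefficients of $f''(x)e^{x(-\lambda+\alpha/2)}$ to sharpen the constant on the $b^2$-part to match $C_1$ exactly) gives the stated bound with constant $C_1$.

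The main obstacle I expect is purely bookkeeping: tracking all the $T$ and $(1-e^{-2\lambda T})$ factors through Corollaries~\ref{k:distance to ONB} and \ref{cor:alpha-2-inner-prod}, and splitting the integration-by-parts identity so that the boundary term $A$ and the remainder $b$ combine into precisely the numerator of $C_1$ rather than into a looser sum of squares. The analytic substance — namely that two derivatives of $f$ give $|\mu_n|^{-1}$ decay from one integration by parts plus the ability to sum $|\mu_n|^{-2}$ — is straightforward once the Fourier representation of Corollary~\ref{cor:alpha-2-inner-prod} is in hand.
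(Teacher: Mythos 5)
Your proposal is correct and follows essentially the same route as the paper's own proof: reduce to the tail sum $\sum_{|n|>k}|\langle f,g_n^*\rangle_\alpha|^2$ via Corollary~\ref{k:distance to ONB}, express the coefficients with Corollary~\ref{cor:alpha-2-inner-prod}, integrate by parts once to gain a factor $|\xi_n|^{-1}$, and sum the tail $\sum_{|n|>k}|\xi_n|^{-2}\leq T^2/(2\pi^2k)$. The only cosmetic difference is your intermediate bound $(|A|+b)^2$ versus the paper's direct use of $|x+y|^2\leq 2(|x|^2+|y|^2)$, which you already note how to reconcile.
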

\begin{proof}
 Corollary~\ref{k:distance to ONB} yields
  \begin{align*}
  \Vert f - \Pi_kf\Vert^2_\alpha &= \Vert \sum_{\vert n\vert>k} g_n\<f,g_n^*\>_{\alpha}\Vert_\alpha^2
       \leq C \sum_{\vert n\vert>k} \vert \<f,g_n^*\>_{\alpha}\vert^2\,
  \end{align*}
  where $C:=(1-e^{-2\lambda T})^{-1}$. Define $h_n(x):=\exp(\xi_nx)$, $x\geq 0$, where we 
denote $\xi_n=-\frac{2\pi i}{T}n-\lambda+\frac{\alpha}{2}$. Then, by Corollary~\ref{cor:alpha-2-inner-prod}
and integration-by-parts we find
 \begin{align*}
  \vert \<f,g_n^*\>_{\alpha}\vert^2 &=C^2T^{-1}\left\vert\int_0^Tf'(x)h_n(x)dx\right\vert^2 \\
     &= C^2T^{-1}\frac{1}{\vert\xi_n\vert^2}\left\vert f'(T)h_n(T)-f'(0)h_n(0)-\int_0^T f''(x)h_n(x)\,dx \right\vert^2 \\
     & \leq \frac{2C^2}{T}\frac{1}{\vert\xi_n\vert^2}A_f\,,
 \end{align*}
  for any $n\in\mathbb Z\backslash\{0\}$, where the constant $A_f$ is
$$
A_f:=\left\vert f'(T)e^{T(-\lambda+\alpha/2)}-f'(0)\right\vert^2+(\int_0^T \vert f''(x)e^{x(\lambda-\alpha/2)}\, dx)^2\,.
$$
Moreover, we have
 \begin{align*}
    \sum_{\vert n\vert>k}\frac{1}{\vert\xi_n\vert^2}= 2 \sum_{n>k}\frac{1}{\vert\xi_n\vert^2}\leq \frac{T^2}{2\pi^2 k}.
 \end{align*}
  Putting the estimates together, we get
$$
\Vert f - \Pi_kf\Vert^2_\alpha \leq A_f\frac{C^3T}{\pi^2k}\,, 
$$
as claimed.
\end{proof}
We can find a similar convergence rate for $c_{k,t}$, a result which becomes useful later:
\begin{lem}
\label{lemma:approximation speed_ckt}
Let $c_{k,t}$ be given as in Proposition~\ref{l:commutator of U and projectors}. Then,
$$
\Vert c_{k,t}\Vert_\alpha^2 \leq \frac{C_2}{k}\,,
$$
for any $k\in\mathbb N$, where $C_2=T/\pi^2(1-\exp(-2\lambda T))$.
\end{lem}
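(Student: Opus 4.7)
The plan is to reduce $\|c_{k,t}\|_\alpha^2$ to the scalar tail sum $\sum_{|n|>k}|g_n(t)|^2$, and then control each $|g_n(t)|$ uniformly in $t\geq 0$ by an $O(1/|n|)$ bound.

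\emph{Reduction step.} Since $c_{k,t}(0)=\sum_{|n|>k}g_n(t)g_n^*(0)=0$ and $(g_n^*)'(x)=e^{-\alpha x/2}e_n^*(x)$, the norm becomes
\[
\|c_{k,t}\|_\alpha^2=\int_0^\infty\left|\sum_{|n|>k}g_n(t)e_n^*(x)\right|^2 dx.
\]
Using the block form $e_n^*(mT+u)=\tfrac{1-e^{-2\lambda T}}{\sqrt T}e^{-\lambda mT}e^{(2\pi in/T+\lambda)u}$ for $u\in[0,T)$ to split the integral across periods $[mT,(m+1)T]$, and collapsing the geometric sum $\sum_{m\geq 0}e^{-2\lambda mT}=(1-e^{-2\lambda T})^{-1}$, I would arrive at
\[
\|c_{k,t}\|_\alpha^2=\frac{1-e^{-2\lambda T}}{T}\int_0^T e^{2\lambda u}|\phi(u)|^2\,du,\qquad\phi(u):=\sum_{|n|>k}g_n(t)e^{2\pi inu/T}.
\]
Applying Parseval on $L^2([0,T])$ together with a suitable bound on the factor $e^{2\lambda u}$ then gives $\|c_{k,t}\|_\alpha^2\leq C_\lambda\sum_{|n|>k}|g_n(t)|^2$ for a constant $C_\lambda$ depending only on $\lambda$ and $T$. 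An alternative route is to invoke the Riesz basis property of the biorthogonal system $\{g_*,\{g_n^*\}_{n\in\mathbb Z}\}$ of $H_\alpha^T$ directly, whose upper Riesz bound comes from $\|(\mathcal A^{-1})^*\|_{\mathrm{op}}^2=\|\mathcal A^{-1}\|_{\mathrm{op}}^2$ via Lemma~\ref{l:stetige Einbettung} and the isometry $\Theta$.

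\emph{Coefficient estimate.} The formula $g_n(t)=(\lambda_n\sqrt T)^{-1}(e^{\lambda_n t}-1)$ together with $\mathrm{Re}(\lambda_n)=-(\lambda+\alpha/2)\leq 0$ gives $|e^{\lambda_n t}|\leq 1$ and hence $|e^{\lambda_n t}-1|\leq 2$. Combined with $|\lambda_n|^2\geq(2\pi n/T)^2$ for $n\neq 0$, this yields the uniform-in-$t$ bound $|g_n(t)|^2\leq T/(\pi^2 n^2)$. Summing exactly as in the proof of Proposition~\ref{p:approximation speed}, $\sum_{|n|>k}|g_n(t)|^2\leq 2T/(\pi^2 k)$, and combining with the reduction step gives the desired $\|c_{k,t}\|_\alpha^2\leq C_2/k$.

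\emph{Main obstacle.} The delicate point is recovering the sharp constant $C_2=T/(\pi^2(1-e^{-2\lambda T}))$ precisely as stated in the lemma. The naive bound $e^{2\lambda u}\leq e^{2\lambda T}$ on the weighted $L^2$-integral, or equivalently the bare Riesz-basis upper bound on $\|\sum a_n g_n^*\|_\alpha^2$, produces a prefactor proportional to $e^{2\lambda T}-1$ rather than $1/(1-e^{-2\lambda T})$; obtaining the cleaner form requires a more refined control of $\int_0^T e^{2\lambda u}|\phi(u)|^2\,du$, for instance through the positive Toeplitz form with symbol $\sum_{j\in\mathbb Z}(2\lambda+2\pi ij/T)^{-1}e^{ij\theta}$ (whose supremum on the circle equals $T/(1-e^{-2\lambda T})$), so that the exponential prefactors cancel against the geometric-series contribution $(1-e^{-2\lambda T})$ coming from the reduction.
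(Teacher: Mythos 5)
Your proof takes essentially the same route as the paper's: both reduce $\Vert c_{k,t}\Vert_\alpha^2$ to the tail sum $\sum_{|n|>k}|g_n(t)|^2$ via an upper bound for expansions in the biorthogonal system $\{g_n^*\}$, and both then use $|e^{\lambda_n t}|\le 1$ together with $|\lambda_n|\ge 2\pi |n|/T$ to get $|g_n(t)|^2\le T/(\pi^2 n^2)$ uniformly in $t\ge 0$ and sum the tail. Your reduction is carried out correctly --- the identity
\[
\Vert c_{k,t}\Vert_\alpha^2=\frac{1-e^{-2\lambda T}}{T}\int_0^T e^{2\lambda u}\Bigl|\sum_{|n|>k}g_n(t)e^{2\pi i n u/T}\Bigr|^2\,du
\]
checks out --- and the estimate $e^{2\lambda u}\le e^{2\lambda T}$ plus Parseval gives $\Vert c_{k,t}\Vert_\alpha^2\le 2T(e^{2\lambda T}-1)/(\pi^2 k)$. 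This proves the substance of the lemma, namely the $O(1/k)$ decay, which is all that is used downstream (in Proposition~\ref{p:convergence speed} the value of $C_2$ only enters the constant $A(t)$).

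On the constant, which you correctly isolate as the delicate point: the paper obtains $C_2=T/(\pi^2(1-e^{-2\lambda T}))$ by applying Corollary~\ref{k:distance to ONB} with ``$\{g_n^*\}$ as the Riesz basis and $\{g_n\}$ as biorthogonal system'', i.e.\ by reusing the upper constant $(1-e^{-2\lambda T})^{-1}$ for the dual system. By Riesz-basis duality, however, the upper bound for the dual system is the reciprocal of the \emph{lower} bound of the primal one, which from Corollary~\ref{k:distance to ONB} is $e^{2\lambda T}-1$ --- exactly the constant you compute, and the displayed identity shows this is the best constant valid for arbitrary tail coefficient sequences. So your scepticism is justified: the paper's constant is not supported by its own citation, and your bound implies the stated one only when $(e^{2\lambda T}-1)(1-e^{-2\lambda T})\le 1$, i.e.\ roughly $\lambda T\lesssim 0.48$. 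Your proposed Toeplitz repair does not close the gap: the form $\int_0^T e^{2\lambda u}|\phi(u)|^2\,du$ has Toeplitz coefficients $(e^{2\lambda T}-1)(2\lambda+2\pi i l/T)^{-1}$, and multiplying the symbol supremum $T/(1-e^{-2\lambda T})$ by $(e^{2\lambda T}-1)$ and by the outer prefactor $(1-e^{-2\lambda T})/T$ returns exactly $e^{2\lambda T}-1$ again; moreover the compression of this Toeplitz form to the index set $\{|n|>k\}$ has the same norm as the full operator, so nothing is gained from that direction. In short, your argument is a correct and rigorous proof of the $C/k$ bound with $C=2T(e^{2\lambda T}-1)/\pi^2$; recovering the smaller constant claimed in the statement would require either an argument exploiting the specific structure of the coefficients $g_n(t)$ or a correction of the constant in the lemma.
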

\begin{proof}
We appeal to Corollary~\ref{k:distance to ONB}, using $\{g_n^*\}_{n\in\mathbb Z}$ as the Riesz basis
with biorthogonal system $\{g_n\}_{n\in\mathbb Z}$, to find
\begin{align*}
\|c_{k,t}\|_{\alpha}^2&=\|\sum_{|n|>k}g_n(t)g_n^*\|_{\alpha}^2 \\
&\leq C\sum_{|n|>k}|g_n(t)|^2 \\
&=\frac{C}{T}\sum_{|n|>k}\frac{1}{\vert\lambda_n\vert^2}\left\vert e^{\lambda_n t}-1\right\vert^2 \\
&\leq\frac{2C}{T}(1+e^{-(2\lambda+\alpha)t})\sum_{|n|>k}\frac{1}{\vert\lambda_n\vert^2} \\
&\leq \frac{CT}{\pi^2}\frac1k\,,
\end{align*}
for $C=(1-\exp(-2\lambda T))^{-1}$. Hence, the result follows.
\end{proof}

With these results we are now in the position to study arbitrage-free approximations of the forward dynamics in \eqref{e:HJMM-equation}.

\section{Arbitrage free approximation of forward term structure models}\label{s:approximation}
In this section we find an arbitrage-free approximation of a forward term structure model -- stated in the Heath-Jarrow-Morton-type setup -- which lives in a 
finite dimensional state space. We furthermore derive the convergence speed of the approximation, and extend the results to account for 
forward contracts delivering the underlying commodity over a period which is the case for electricity and gas.

Consider the SPDE \eqref{e:HJMM-equation} with a mild solution $f\in H_{\alpha}$ given by \eqref{e:HJMM-equation-mild}. We recall 
from \eqref{e:g-star-def}-\eqref{e:g-n-def} and Corollary~\ref{k:Riesz basis on H} the 
Riesz basis $\{g_*,\{g_n\}_{n\in\mathbb Z}\}$ on the space $H_{\alpha}^T$ with the biorthogonal system  
$\{g_*,\{g_n^*\}_{n\in\mathbb Z}\}$. Furthermore,
$\Pi$ is the projection of $H_{\alpha}$ on $H_{\alpha}^T$, while from \eqref{def-H-k-space} and 
Proposition~\ref{p:U and the riesz basis} we have the projection $\Pi_k$ of $H_{\alpha}^T$
on $H_{\alpha}^{T,k}$ and the operator $\mathcal C_{k,t}$ for $k\in\mathbb N$, $t\geq 0$. 
Let us define the continuous linear operator  $\Lambda_k:H_{\alpha}\rightarrow H_{\alpha}^{T,k}$ by
\begin{equation}
\Lambda_k= \Pi_k\Pi
\end{equation}
for any $k\in\mathbb N$. The following theorem is one of the main results of the paper:

\begin{thm}\label{t:main statement}
For $k\in\mathbb N$, let $f_k$ be the mild solution of the SPDE
\begin{equation}
\label{e:approx-f-k}
df_k(t) = \partial_x f_k(t) dt + \Lambda_k\beta(t) dt + \Lambda_k\Psi(t) dL(t),\quad t\geq 0, f_k(0) = \Lambda_kf_0\,.
\end{equation}
Then, we have
 \begin{enumerate}
  \item $\E\left[\sup_{x\in[0,T-t]}\vert f_k(t,x) - f(t,x)\vert^2\right] \rightarrow 0$ for $k\rightarrow\infty$ and any $t\in [0,T]$,
  \item $f_k$ takes values in the finite dimensional space $H_\alpha^{T,k}$, moreover, $f_k$ is a strong solution to the SPDE \eqref{e:approx-f-k}, i.e.\ $f_k\in\dom(\partial_x)$, {$t\mapsto \partial_xf_k(t)$} is $P$-a.s.\ Bochner-integrable and
   $$ f_k(t) = f_k(0) + \int_0^t (\partial_xf_k(s)+\Lambda_k\beta(s))ds + \int_0^t \Lambda_k\Psi(s) dL(s)\,, $$
   \item and,
    \begin{align*}
      f_k(t) &= S_k(t) + \sum_{n=-k}^k \left(e^{\lambda_n t} \<f_k(0),g_n^*\>_{\alpha} + \int_0^t e^{\lambda_n (t-s)}dX_n(s)\right)g_n \,,
    \end{align*} 
    where $S_k(t)= \delta_0(f_k(t))$ and $X_n(t) := \int_0^t \<\Pi\beta(s)ds+\Pi\Psi(s)dL(s),g_n^*\>_{\alpha}$ for any $n\in\mathbb Z$, $t\geq0$.
 \end{enumerate}
\end{thm}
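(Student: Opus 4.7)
I will prove (2) and (3) first, which identify the finite-dimensional state space of $f_k$ and give its explicit coordinate dynamics in the Riesz basis, and then use this structure to bound the error in (1). The main obstacle will be the decomposition argument for (1): recognising how to split $I-\Lambda_k$ so that the ``spatially irrelevant'' part vanishes identically on $[0,T-t]$, after which the residual commutator terms fit into the convergence machinery of Proposition~\ref{l:commutator of U and projectors}.

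\textbf{Parts (2) and (3).} Since $\Lambda_k = \Pi_k\Pi$ has range in $H_\alpha^{T,k}$, and Proposition~\ref{p:U and the riesz basis} gives $\U_t g_n = e^{\lambda_n t}g_n + g_n(t)g_*$ together with $\U_t g_* = g_*$, the semigroup $\U_t$ preserves $H_\alpha^{T,k}$; hence the mild formula forces $f_k(t)\in H_\alpha^{T,k}$. From $\partial_x g_n = \lambda_n g_n + T^{-1/2}g_*$ and $\partial_x g_* = 0$, the operator $\partial_x$ is bounded on $H_\alpha^{T,k}$, which upgrades the mild solution to a strong solution as in (2). Expanding $f_k(t) = S_k(t)g_* + \sum_{|n|\leq k}\alpha_n(t)g_n$ with $\alpha_n(t) := \<f_k(t),g_n^*\>_\alpha$, and observing that $g_n(0)=0$ so $S_k(t)=\delta_0(f_k(t))$, I pair the strong SDE against $g_n^*$. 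Biorthogonality ($\<g_m,g_n^*\>_\alpha=\delta_{mn}$ and $\<g_*,g_n^*\>_\alpha = 0$ since $g_n^*(0)=0$) together with the action of $\partial_x$ yield $\<\partial_x f_k(t), g_n^*\>_\alpha = \lambda_n\alpha_n(t)$. The identities $\Pi_k^* g_n^* = g_n^*$ for $|n|\leq k$ (immediate from the defining formula of $\Pi_k$) and $\Pi^* g_n^* = g_n^*$ (because $g_n^*\in H_\alpha^T$) combine to give $\<\Lambda_k\,\cdot\,, g_n^*\>_\alpha = \<\Pi\,\cdot\,, g_n^*\>_\alpha$, so $d\alpha_n(t) = \lambda_n\alpha_n(t)\,dt + dX_n(t)$ with $X_n$ as in the statement; integrating yields the Ornstein--Uhlenbeck representation of (3).

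\textbf{Part (1).} Subtracting the mild formulas gives
$$e_k(t) := f(t)-f_k(t) = \U_t(I-\Lambda_k)f_0 + \int_0^t \U_{t-s}(I-\Lambda_k)\beta(s)\,ds + \int_0^t \U_{t-s}(I-\Lambda_k)\Psi(s)\,dL(s).$$
Decomposing $I-\Lambda_k = (I-\Pi) + (I-\Pi_k)\Pi$, and noting that for $x\in[0,T-t]$ and $s\in[0,t]$ the point $x+(t-s)\in[0,T]$, Corollary~\ref{k:Riesz basis on H} gives $\Pi h(x+(t-s))=h(x+(t-s))$, so $\U_{t-s}(I-\Pi)h(x)=0$. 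Hence on $[0,T-t]$ only the $(I-\Pi_k)\Pi$-piece survives. The commutator identity $\U_{t-s}(I-\Pi_k)h = (I-\Pi_k)\U_{t-s}h + \C_{k,t-s}h$ from Proposition~\ref{l:commutator of U and projectors} (valid since $\U_{t-s}$ preserves $H_\alpha^T$) then yields, for $x\in[0,T-t]$,
$$e_k(t,x) = \big[(I-\Pi_k)\tilde f(t)\big](x) + R_k(t,x),$$
where $\tilde f(t) := \U_t\Pi f_0 + \int_0^t\U_{t-s}\Pi\beta(s)\,ds + \int_0^t\U_{t-s}\Pi\Psi(s)\,dL(s) \in H_\alpha^T$ and $R_k(t) := \C_{k,t}\Pi f_0 + \int_0^t\C_{k,t-s}\Pi\beta(s)\,ds + \int_0^t\C_{k,t-s}\Pi\Psi(s)\,dL(s)$. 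Since point evaluation is bounded on the Filipovi\'c space, $\E[\sup_{x\in[0,T-t]}|e_k(t,x)|^2] \lesssim \E[\|(I-\Pi_k)\tilde f(t)\|_\alpha^2] + \E[\|R_k(t)\|_\alpha^2]$. The first term vanishes by Proposition~\ref{l:commutator of U and projectors} (pointwise $(I-\Pi_k)\tilde f(t)\to 0$) combined with the uniform bound on $\|\Pi_k\|_{\op}$ and dominated convergence with dominating function proportional to $\|\tilde f(t)\|_\alpha^2$. The second term vanishes by the final statement of Proposition~\ref{l:commutator of U and projectors} applied to the $H_\alpha^T$-valued semimartingale $X(s) := \int_0^s\Pi\beta(u)\,du + \int_0^s\Pi\Psi(u)\,dL(u)$, combined with $\|\C_{k,t}\Pi f_0\|_\alpha\to 0$.
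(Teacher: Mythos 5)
Your proposal is correct and follows essentially the same route as the paper: the decomposition $I-\Lambda_k=(I-\Pi)+(I-\Pi_k)\Pi$, the observation that the $(I-\Pi)$ part vanishes on $[0,T-t]$ (your $\tilde f$ is the paper's $f_\Pi$), the commutator identity $\U_t(I-\Pi_k)=(I-\Pi_k)\U_t+\C_{k,t}$, and the convergence statements of Proposition~\ref{l:commutator of U and projectors} are exactly the ingredients the authors use. The only cosmetic differences are that you obtain part (2) directly from the invariance of $H_\alpha^{T,k}$ under $\U_t$ rather than from the commutator decomposition, and part (3) by pairing the strong-form equation with $g_n^*$ and solving the resulting Ornstein--Uhlenbeck SDE, whereas the paper pairs the mild formula with $g_n^*$ and uses $\U_t^*g_n^*=e^{\overline{\lambda_n}t}g_n^*$ --- equivalent computations.
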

\begin{proof}
 (1) Define 
 $$f_\Pi(t) := \U_t\Pi f_0 + \int_0^t \U_{t-s}(\Pi\beta(s)ds + \Pi\Psi(s)dL(s))),\quad t\geq0.$$
Since $f_k$ is a mild solution,  we have
 \begin{align*}
   f_k(t) &= \U_t\Pi_k\Pi f_0 + \int_0^t \U_{t-s}\Pi_k(\Pi\beta(s)ds + \Pi\Psi(s)dL(s)) \\
          &= \Pi_k\U_t\Pi f_0 + \int_0^t \Pi_k\U_{t-s}(\Pi\beta(s)ds + \Pi\Psi(s)dL(s)) \\
          &\hspace{0.3cm} - \mathcal C_{k,t}\Pi f_0 - \int_0^t \mathcal C_{k,t-s}(\Pi\beta(s)ds + \Pi\Psi(s)dL(s)) \\
          &= \Pi_k\left(\U_t\Pi f_0 + \int_0^t \U_{t-s}(\Pi\beta(s)ds + \Pi\Psi(s)dL(s)))\right) \\
          &\hspace{0.3cm} - \mathcal C_{k,t}\Pi f_0 - \int_0^t \mathcal C_{k,t-s}(\Pi\beta(s)ds + \Pi\Psi(s)dL(s)) \\
          &= \Pi_k(f_\Pi(t)) - \mathcal C_{k,t}\Pi f_0 - \int_0^t \mathcal C_{k,t-s}(\Pi\beta(s)ds + \Pi\Psi(s)dL(s))
 \end{align*}
 for any $t\geq0$. From Benth and Kr\"uhner~\cite[Lemma 3.2]{benth.kruehner.14} the sup-norm is dominated by the $H_{\alpha}$-norm. Thus, there is a constant $c>0$ such that
  \begin{align*}
   \E\left[\sup_{x\in[0,T-t]}\vert \Pi_k(f_\Pi(t,x)) - f_\Pi(t,x)\vert^2\right] & \leq c\E\left[ \Vert (\Pi_k-\mathcal I)f_\Pi(t) \Vert^2_\alpha \right]
  \end{align*}
 for any $t\geq 0$ where $\mathcal I$ denotes the identity operator on $H_\alpha$. The dominated convergence theorem yields that the right-hand side converges to $0$ for $k\rightarrow \infty$. Clearly, we have
   $$ \sup_{x\in[0,T-t]}\vert \mathcal C_{k,t}f_\Pi(0,x) \vert \leq c\Vert \mathcal C_{k,t}f_\Pi(0) \Vert_\alpha \rightarrow 0\,,  $$
 for $k\rightarrow \infty$. Proposition~\ref{l:commutator of U and projectors} states that
   $$ \E\left\Vert \int_0^t \mathcal C_{k,t-s}(\Pi\beta(s)ds + \Pi\Psi(s)dL(s)) \right\Vert_\alpha^2 \rightarrow 0\,,$$
  for $k\rightarrow 0$. Hence, we have
 $$ \E\left(\sup_{x\in[0,T-t]}\vert f_k(t,x)-f_\Pi(t,x) \vert^2\right) \rightarrow 0\,, $$
 for $k\rightarrow \infty$ and any $t\in [0,T]$. Since $f_\Pi(t,x) = f(t,x)$ for any $t\in[0,T]$, $x\in [0,T-t]$ the first part follows.
 
 (2) Note first that $\partial_x g_n(x)=\exp(\lambda_n x)/\sqrt{T}=\lambda_ng_n(x)+g_*(x)/\sqrt{T}$, and hence
$\partial_x g_n\in H_{\alpha}^{T,k}$ whenever $|n|\leq k$. Thus, $H_{\alpha}^{T,k}$ is invariant under the
generator $\partial_x$, and its restriction to $H_{\alpha}^{T,k}$ is continuous and bounded. We find that $f_k$ takes values only in $H_\alpha^{T,k}$ because 
 \begin{align*}
f_k(t) &= \Pi_k\left(\U_t\Pi f_0 + \int_0^t \U_{t-s}(\Pi\beta(s)ds + \Pi\Psi(s)dL(s)))\right) \\
          &\hspace{0.3cm} - \mathcal C_{k,t}\Pi f_0 - \int_0^t \mathcal C_{k,t-s}(\Pi\beta(s)ds+\Pi\Psi(s)dL(s))\,,
 \end{align*} 
where all summands are clearly in $H_\alpha^{T,k}$. 

 (3) As $f_k(t)\in H_{\alpha}^{T,k}$, we have the representation
$$
f_k(t)=\<f_k(t),g_*^*\>_{\alpha}g_*+\sum_{n=-k}^k \<f_k(t),g_n^*\>_{\alpha}g_n\,.
$$ 
Since $g_*^*=1$, we find that $\<f_k(t),g_*^*\>_{\alpha}=f_k(t,0)=\delta_0(f_k(t))$. Thus, from the mild solution of 
\eqref{e:approx-f-k} we find, using Proposition \ref{p:U and the riesz basis}
\begin{align*}
f_k(t)&=S_k(t)+\sum_{n=-k}^k \left\<\mathcal{U}_t f_k(0)+\int_0^t\U_{t-s}(\Lambda_k\beta(s)ds + \Lambda_k\Psi(s)dL(s)),g_n^*\right\>_{\alpha}g_n \\
&=S_k(t)+\sum_{n=-k}^{k}\<f_k(0),\mathcal{U}_t^*g_n^*\>_{\alpha}g_n \\
&\qquad\qquad+\sum_{n=-k}^{k}\int_0^t\<\Lambda_k\beta(s)ds + \Lambda_k\Psi(s)dL(s),\mathcal U_{t-s}^*g_n^*\>_{\alpha}g_n \\
&=S_k(t)+\sum_{n=-k}^{k}e^{\lambda_n t}\<f_k(0),g_n^*\>_{\alpha}g_n \\
&\qquad\qquad+\sum_{n=-k}^k\int_0^t e^{\lambda_n(t-s)}\<\Lambda_k\beta(s)ds + \Lambda_k\Psi(s)dL(s),g_n^*\>_{\alpha}g_n \,.
\end{align*}
Observe that for any $f\in H_{\alpha}$,
$$
\Lambda_k f=\Pi_k(\Pi f)=(\Pi f)(0)g_*+\sum_{m=-k}^k\<\Pi f,g_m^*\>_{\alpha}g_m\,,
$$
and since $\{g_*,\{g_n\}_{n\in\mathbb Z}\}$, $\{g_*^*,\{g^*_n\}_{n\in\mathbb Z}\}$ are biorthogonal systems 
\begin{align*}
\<\Lambda_k f,g_n^*\>_{\alpha}&=(\Pi f)(0)\<g_*,g_n^*\>_{\alpha}+\sum_{m=-k}^k\<\Pi f,g_m^*\>_{\alpha}\<g_m,g_n^*\>_{\alpha}
=\<\Pi f,g_n^*\>_{\alpha}1_{\{\vert n\vert\leq k\}}\,.
\end{align*}
Hence, the claim follows.
\end{proof}

Another view on Theorem \ref{t:main statement} is that all processes in the $k$-th approximation 
of $f$ can be expressed in terms of the factor processes $X_*,X_{-k},\dots,X_k$, as stated below.
\begin{cor}\label{k:state variables}
 Under the assumptions and notations of Theorem \ref{t:main statement}, we have for $k\in\mathbb N$,
  \begin{align*}
    f_k(t,x) &= S_k(t) + \sum_{n=-k}^k U_n(t) g_n(x)\,,
  \end{align*}
 for any $0\leq t<\infty$ and $x\geq 0$. Here,
\begin{align*}
     S_k(t)&= S_k(0) + X_*(t) + \sum_{n=-k}^k \left( g_n(t)U_n(0) +\int_0^tg_n(t-s)dX_n(s)\right) \,,
  \end{align*}
with,
  \begin{align*}
     X_n(t) &:= \left\< \int_0^t(\Pi\beta(s)ds+\Pi\Psi(s)dL(s)),g_n^*\right\>_{\alpha}, \\
     X_*(t) &:= \left\< \int_0^t(\Pi\beta(s)ds+\Pi\Psi(s)dL(s)),g_*\right\>_{\alpha}, \\
    U_n(t) &:= e^{\lambda_n t} \<f_k(0),g_n^*\> + \int_0^t e^{\lambda_n (t-s)}dX_n(s)   \end{align*}
 for $n\in\{-k,\dots, k\}$.
\end{cor}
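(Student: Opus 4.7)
The plan is to read off both assertions directly from the mild solution representation derived in Theorem~\ref{t:main statement}(3), which already gives $f_k(t) = S_k(t) + \sum_{n=-k}^k U_n(t) g_n$ with $S_k(t) = \delta_0(f_k(t))$ and $U_n(t)$ as stated. Since $g_*(0)=1$ and $g_n(0)=0$ for all $n$, evaluation at a general $x\geq 0$ immediately yields $f_k(t,x)=S_k(t)+\sum_{n=-k}^k U_n(t) g_n(x)$, and the identity $U_n(0)=\langle f_k(0),g_n^*\rangle_\alpha$ matches the formula for $U_n(t)$ given in the corollary. So the only substantive task is to derive the stated expression for the spot process $S_k(t)$.

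The main step is to evaluate $\delta_0$ on each of the three summands in the mild solution
\[
f_k(t) = \mathcal U_t f_k(0) + \int_0^t \mathcal U_{t-s}\Lambda_k\beta(s)\,ds + \int_0^t \mathcal U_{t-s}\Lambda_k\Psi(s)\,dL(s),
\]
noting that $\delta_0 = \langle\cdot,g_*\rangle_\alpha$ is a bounded linear functional on $H_\alpha$ and hence commutes with Bochner and $H_\alpha$-valued stochastic integration. For the initial term I would expand $f_k(0)=\Lambda_k f_0\in H_\alpha^{T,k}$ as $S_k(0)g_*+\sum_{n=-k}^k U_n(0)g_n$, apply Proposition~\ref{p:U and the riesz basis} to obtain $\mathcal U_t f_k(0) = \bigl(S_k(0)+\sum_{n=-k}^k U_n(0) g_n(t)\bigr)g_* + \sum_{n=-k}^k U_n(0) e^{\lambda_n t} g_n$, and then $\delta_0$ extracts the $g_*$-coefficient $S_k(0)+\sum_{n=-k}^k U_n(0) g_n(t)$. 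For the drift and stochastic integrand, I would expand pathwise $\Lambda_k h = \langle\Pi h,g_*\rangle_\alpha g_* + \sum_{n=-k}^k \langle\Pi h,g_n^*\rangle_\alpha g_n$, using the identity $\langle\Lambda_k h, g_n^*\rangle_\alpha = \langle\Pi h, g_n^*\rangle_\alpha$ recorded at the end of the proof of Theorem~\ref{t:main statement}, and then apply $\mathcal U_{t-s}$ via Proposition~\ref{p:U and the riesz basis}. The crucial mechanism is that each $\mathcal U_{t-s}g_n$ produces an additional $g_n(t-s)g_*$ term, so that the higher basis coefficients feed into the spot.

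Assembling the three contributions and recognising the resulting combinations as $X_*(t) + \sum_{n=-k}^k \int_0^t g_n(t-s)\,dX_n(s)$ directly from the definitions of $X_*$ and $X_n$ then yields the claimed formula for $S_k(t)$. The only delicate point I anticipate is the rigorous justification for interchanging $\delta_0$, and the scalar inner products against $g_*$ and the $g_n^*$, with the $H_\alpha$-valued stochastic integral; this is standard for bounded linear functionals applied to such an integral in the sense of Peszat and Zabczyk~\cite{peszat.zabczyk.07}, but deserves a line of comment to confirm that the resulting scalar-valued stochastic integrals coincide with the processes $X_*$ and $X_n$ as defined. Apart from that bookkeeping, the proof is a direct algebraic manipulation in the Riesz basis.
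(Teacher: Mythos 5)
Your proposal is correct and follows essentially the same route as the paper: restate the expansion of $f_k(t)$ from part (3) of Theorem~\ref{t:main statement}, then compute $S_k(t)=\langle f_k(t),g_*\rangle_\alpha$ by applying Proposition~\ref{p:U and the riesz basis} (each $\mathcal U_{t-s}g_n$ contributing a $g_n(t-s)g_*$ term) to the three summands of the mild solution, and finally replace $\Lambda_k$ by $\Pi$ in the integrands via $\langle\Lambda_k h,g_n^*\rangle_\alpha=\langle\Pi h,g_n^*\rangle_\alpha$. The paper's proof is exactly this computation, carried out with the inner-product identity $\langle\mathcal U_t h,g_*\rangle_\alpha=\langle h,g_*\rangle_\alpha+\sum_{n=-k}^k g_n(t)\langle h,g_n^*\rangle_\alpha$ rather than by first writing out the full basis expansion of $\mathcal U_t h$.
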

\begin{proof}
  The first equation is a restatement of (3) in Theorem \ref{t:main statement}. Proposition \ref{p:U and the riesz basis} yields
   $$ \<\U_t h,g_*\>_{\alpha} = \<h,g_*\>_{\alpha} + \sum_{n=-k}^k g_n(t)\<h,g_n^*\>_{\alpha} $$
  for any $h\in H_\alpha^{T,k}$ with $h=\<h,g_*\>_{\alpha}g_*+\sum_{n=-k}^k \<h,g_n^*\>_{\alpha}g_n$. Thus, since $g_*=1$ and $g_n(0)=0$ we have
   \begin{align*}
     S_k(t) &= f_k(t,0) \\
&=\< f_k(t),g_*\>_{\alpha} \\
            &= \< \U_{t}f_k(0),g_*\>_{\alpha} + \int_0^t \<\U_{t-s}(\Lambda_k\beta(s)\,ds+\Lambda_k\Psi(s)\,dL(s)),g_*\>_{\alpha} \\
		&= \<f_k(0),g_*\>_{\alpha}+\sum_{n=-k}^kg_n(t)\<f_k(0),g_n^*\>_{\alpha} \\
		&\qquad\qquad+\int_0^t\<\Lambda_k\beta(s)\,ds+\Lambda_k\Psi(s)\,dL(s),g_*\>_{\alpha} \\ 
		&\qquad\qquad+\sum_{n=-k}^k\int_0^tg_n(t-s)\<\Lambda_k\beta(s)+\Lambda_k\Psi(s)\,dL(s),g_n^*\>_{\alpha}\,.
   \end{align*}
As in the proof of Theorem~\ref{t:main statement}, we have $\<\Lambda_k f,g_n^*\>_{\alpha}=\<\Pi f,g_n^*\>_{\alpha}$ for any $f\in H_{\alpha}$. 
Similarly, $\<\Lambda_k f,g_*\>_{\alpha}=\<\Pi f,g_*\>_{\alpha}$ for $n\in\mathbb Z$ with $\vert n\vert\leq k$. The result follows.
\end{proof}
The processes $S_k, U_{-k},\dots, U_k$ in Corollary~\ref{k:state variables} capture at any time $t$ the whole state of the market in the approximation model. 
I.e., the spot price and the forward curve are simple functions of these state variables. As we will see in Corollary \ref{k:Forward prices} below, the forward prices 
of contracts with delivery periods can be expressed in these state variables as well. Note that if we assume  $\<\Pi\beta,g_n^*\>$, $\<\Pi\Psi,g_n^*\>$ are deterministic and constant, then $(X_{-k},\dots,X_k)$ is a $2k+1$-dimensional L\'evy process and $U_{-k},\dots,U_k$ are Ornstein-Uhlenbeck processes. This corresponds to the spot price model suggested in Benth, Kallsen and Meyer-Brandis~\cite{benth.et.al.05}.

From the proof of Corollary~\ref{k:state variables} we find that $S_k(0)=\<f_k(0),g_*\>_{\alpha}$. But then 
$$
S_k(0)=\<\Lambda_k f_0,g_*\>_{\alpha}=\<\Pi f_0,g_*\>_{\alpha}=(\Pi f_0)(0)=f_0(0)\,.
$$
Obviously, $f_0(0)$ is equal to today's spot price, so we obtain that the starting point of the process $S_k(t)$ in the approximation is today's spot price. 
Furthermore, since we have $f_k(t,0)=S_k(t)$ because $g_n(0)=0$ for all $n\in\mathbb Z$, $S_k(t)$ is the approximative spot price dynamics associated
with $f_k(t)$. For $U_n(0)$, $n\in\mathbb Z$ invoking Corollary~\ref{cor:alpha-2-inner-prod} shows that
\begin{align*}
U_n(0)&=\<\Pi f_0,g_n^*\>_{\alpha} \\
&=\frac1{\sqrt{T}(1-e^{-2\lambda T})}\int_0^{T}(\Pi f_0)'(y)
\exp((-\lambda+\alpha/2)x)\exp\left(\frac{2\pi i}{T}nx\right)\,dy\,.
\end{align*}
This is the Fourier transform of the initial forward curve $f_0$ (or, rather its derivative
scaled by an exponential function). In any case, both $S_k(0)$ and $U_n(0)$ are 
given by (functionals of) the initial forward curve $f_0$.

Next, we would like to identify the convergence speed of our approximation, that is, the rate for the convergence in part (1) of Theorem \ref{t:main statement}.
\begin{prop}\label{p:convergence speed}
Assume that $x\mapsto f(t,x)$ is twice continuously differentiable and let
 $f_k$ be the mild solution of the SPDE
  $$ df_k(t) = \partial_x f_k(t) dt + \Lambda_k\beta(t) dt + \Lambda_k\Psi(t) dL(t),\quad t\geq 0, f_k(0) = \Lambda_kf_0\, .$$
  Then, we have
 $$
\E\left[\sup_{x\in[0,T-t]}\vert f_k(t,x) - f(t,x)\vert^2\right] \leq \frac{A(t)}{k}\,,
$$
  for any $k>1$, where 
\begin{align*}
A(t)&:=\frac{3T(1+\alpha^{-1})}{(1-e^{-2\lambda T})}\left\{\|\Pi f_0\|_{\alpha}^2+\int_0^T \E[\tr(\Psi(s)Q\Psi^*(s))]ds+\left(\int_0^T \E\left[\Vert \beta(s)\Vert_\alpha\right]\,ds\right)^2 \right\} \\
&\qquad+\frac{3(1+\alpha^{-1})}{\pi^2(1-e^{-2\lambda T})^3}\left\{T\E\left[|\partial_xf_{\Pi}(t,T)
e^{T(-\lambda+\alpha/2)}-\partial_xf_{\Pi}(t,0)|^2\right] \right. \\
&\qquad\qquad\left.+\left(\int_0^T\E\left[|\partial^2_xf_{\Pi}(t,x)|\right]e^{x(-\lambda+\alpha/2)}\,dx\right)^2\right\}\,.
\end{align*}
\end{prop}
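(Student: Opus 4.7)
The plan is to start from the decomposition of $f_k(t)-f_\Pi(t)$ obtained in the proof of Theorem~\ref{t:main statement}(1), where
$f_\Pi(t):=\U_t\Pi f_0+\int_0^t\U_{t-s}(\Pi\beta(s)\,ds+\Pi\Psi(s)\,dL(s))$ satisfies $f_\Pi(t,x)=f(t,x)$ on $[0,T-t]$. That decomposition reads
\begin{align*}
f_k(t)-f_\Pi(t) = (\Pi_k-\mathcal I)f_\Pi(t)-\mathcal C_{k,t}\Pi f_0 -\int_0^t \mathcal C_{k,t-s}\Pi\beta(s)\,ds -\int_0^t \mathcal C_{k,t-s}\Pi\Psi(s)\,dL(s).
\end{align*}
Since $f_\Pi(t,x)=f(t,x)$ on $[0,T-t]$, controlling $\E[\sup_{x\in[0,T-t]}|f_k(t,x)-f(t,x)|^2]$ reduces to controlling the $H_\alpha$-norm of each summand, combined with the elementary embedding $\sup_{x\geq0}|h(x)|^2\leq 2(1+\alpha^{-1})\|h\|_\alpha^2$ (from $|h(x)|\leq|h(0)|+\int_0^x|h'(y)|\,dy$ plus Cauchy--Schwarz against the weight $e^{\alpha y}$) and the inequality $|\sum_i a_i|^2\leq n\sum_i|a_i|^2$.

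The first summand $(\Pi_k-\mathcal I)f_\Pi(t)$ is handled pathwise by Proposition~\ref{p:approximation speed}. Because $x\mapsto f(t,x)$ is $C^2$ and $\Pi$ leaves functions unchanged on $[0,T]$ (Corollary~\ref{k:Riesz basis on H}), one verifies that $f_\Pi(t,\cdot)|_{[0,T]}\in C^2$, so the proposition yields $\|(\Pi_k-\mathcal I)f_\Pi(t)\|_\alpha^2\leq C_1(t,\omega)/k$ with $C_1(t,\omega)$ equal to the pathwise constant from that proposition evaluated at $f_\Pi(t)$. Taking expectations, using Jensen's inequality on the squared $L^1$-integral of $\partial_x^2 f_\Pi(t,\cdot)$ (weighted by $e^{x(-\lambda+\alpha/2)}$), produces exactly the second brace in $A(t)$, together with the prefactor $1/(\pi^2(1-e^{-2\lambda T})^3)$.

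The three remaining summands all involve the operator $\mathcal C_{k,s}h=\langle h,c_{k,s}\rangle_\alpha g_*$, so Cauchy--Schwarz and $\|g_*\|_\alpha=1$ give $\|\mathcal C_{k,s}h\|_\alpha\leq \|h\|_\alpha\|c_{k,s}\|_\alpha$, and Lemma~\ref{lemma:approximation speed_ckt} provides the uniform estimate $\sup_{s\in[0,t]}\|c_{k,s}\|_\alpha^2\leq T/(\pi^2(1-e^{-2\lambda T})k)$. Concretely, $\|\mathcal C_{k,t}\Pi f_0\|_\alpha^2\leq \|c_{k,t}\|_\alpha^2\|\Pi f_0\|_\alpha^2$; the triangle inequality for Bochner integrals gives $\|\int_0^t\mathcal C_{k,t-s}\Pi\beta(s)\,ds\|_\alpha\leq \sup_s\|c_{k,s}\|_\alpha\cdot\int_0^t\|\Pi\beta(s)\|_\alpha\,ds$; and for the stochastic term one simply rereads the It\^o-isometry calculation already carried out in the proof of Proposition~\ref{l:commutator of U and projectors}, namely
\begin{align*}
\E\Big\|\int_0^t\mathcal C_{k,t-s}\Pi\Psi(s)\,dL(s)\Big\|_\alpha^2 \leq \sup_{s\in[0,t]}\|c_{k,s}\|_\alpha^2\,\E\int_0^t\tr\bigl(\Pi\Psi(s)\mathcal Q\Psi^*(s)\Pi^*\bigr)\,ds,
\end{align*}
after which the bound $\|\Pi\|_{\mathrm{op}}^2\leq (1-e^{-2\lambda T})^{-1}$ (Corollary~\ref{k:Riesz basis on H}) allows the trace on the right-hand side to be replaced by $\tr(\Psi(s)\mathcal Q\Psi^*(s))$. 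Summing the three contributions, applying Jensen and Fubini in the expectations of $\|\Pi\beta\|_\alpha$, and collecting the factor $\|\Pi\|_{\mathrm{op}}^2$ assembles precisely the first brace in $A(t)$.

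The hard part is not analytical but purely bookkeeping: one must thread the embedding constant $2(1+\alpha^{-1})$, the projection norm $(1-e^{-2\lambda T})^{-1/2}$ that is picked up every time $\Pi$ appears, the $T$-factors that enter via $\|g_*\|_\alpha$ and the Cauchy--Schwarz step on the drift integral, and the combinatorial factor from $|\sum a_i|^2\leq n\sum|a_i|^2$, so that the resulting coefficients coalesce into the advertised $3(1+\alpha^{-1})$ and $3(1+\alpha^{-1})/(\pi^2(1-e^{-2\lambda T})^3)$. A minor subtlety worth flagging is why $f_\Pi(t,\cdot)$ inherits $C^2$-smoothness on all of $[0,T]$ and not merely on $[0,T-t]$; this follows from the explicit representation of $\Pi$ through $\Theta$ and $\mathcal P_{\mathcal A}$, which preserves $C^2$ on $[0,T]$ whenever $f(t,\cdot)\in C^2(\mathbb R_+)$.
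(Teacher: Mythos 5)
Your proposal is correct and follows essentially the same route as the paper's proof: the same decomposition of $f_k(t)-f_\Pi(t)$ from the proof of Theorem~\ref{t:main statement}, Proposition~\ref{p:approximation speed} for the $(\Pi_k-\mathcal I)f_\Pi(t)$ term, the bound $\Vert\mathcal C_{k,s}h\Vert_\alpha\leq\Vert h\Vert_\alpha\Vert c_{k,s}\Vert_\alpha$ with Lemma~\ref{lemma:approximation speed_ckt}, the isometry computation from Proposition~\ref{l:commutator of U and projectors} for the stochastic integral, and the sup-norm embedding. The only quibble is that to land on the advertised factor $3(1+\alpha^{-1})$ you need the sharp embedding constant $c^2=1+\alpha^{-1}$ (Cauchy--Schwarz applied to the two-term sum $|h(0)|+\alpha^{-1/2}\Vert\cdot\Vert$ jointly, as in Lemma~3.2 of Benth and Kr\"uhner), not the lossier $2(1+\alpha^{-1})$ you quote.
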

\begin{proof}
  In the proof of Theorem~\ref{t:main statement} we have shown that
  \begin{align*}
    f_k(t) &= \Pi_k(f_\Pi(t)) - \mathcal C_{k,t}\Pi f_0 - \int_0^t \mathcal C_{k,t-s}(\Pi\beta(s)ds + \Pi\Psi(s)dL(s))\,,
  \end{align*}
  where $f_\Pi(t) := \U_t\Pi f_0 + \int_0^t \U_{t-s}(\Pi\beta(s)ds + \Pi\Psi(s)dL(s)))$ for any $t\geq 0$. By Proposition~\ref{p:approximation speed} we have
  $$ 
\left\Vert f_\Pi(t) - \Pi_k(f_\Pi(t))\right\Vert^2_\alpha \leq\frac{C_1(t)}{k}
$$
where $C_1(t)$ is a random variable defined by 
$$
C_1(t)=\frac{T|\partial_xf_{\Pi}(t,T)e^{T(-\lambda+\alpha/2)}-\partial_xf_{\Pi}(t,0)|^2+(\int_0^T|\partial^2_xf_{\Pi}(t,x)|e^{x(-\lambda+\alpha/2)}\,dx)^2}{\pi^2(1-e^{-2\lambda T})^3}\,.
$$
Remark that from the proof of Theorem~\ref{t:main statement} we find for any $h\in H_{\alpha}^T$
\begin{align*}
\|\mathcal C_{k,t}h\|_{\alpha}^2&=\|\<h,c_{k,t}\>_{\alpha}g_*\|_{\alpha}^2=|\<h,c_{k,t}\>_{\alpha}|^2\leq \|h\|_{\alpha}^2\|c_{k,t}\|_{\alpha}^2\,,
\end{align*}
and therefore, from Lemma~\ref{lemma:approximation speed_ckt}
$$
\|\mathcal C_{k,t}h\|_{\alpha}^2\leq\|h\|_{\alpha}^2\frac{C_2}{k}\,,
$$
for the constant $C_2=T/\pi^2(1-e^{-2\lambda T})$. 
 Then, we have
  \begin{align*}
   \Vert f_k(t) - f_\Pi(t) \Vert_\alpha^2  &\leq 3\Vert \Pi_k(f_\Pi(t))-f_\Pi(t)\Vert_\alpha^2 + 3\Vert \mathcal C_{k,t}\Pi f_0\Vert_\alpha^2 \\
&\qquad\qquad+ 
3\Vert \int_0^t \mathcal C_{k,t-s}(\Pi\beta(s)ds + \Pi\Psi(s)dL(s)) \Vert_\alpha^2 \\
     &\leq \frac{3C_1(t)}{k} + \frac{3C_2}{k}\Vert \Pi f_0\Vert_\alpha^2  \\
&\qquad\qquad+ 3\Vert \int_0^t \mathcal C_{k,t-s}(\Pi\beta(s)ds + \Pi\Psi(s)dL(s)) \Vert_\alpha^2.
  \end{align*}
By Lemma~3.2 in Benth and Kr\"uhner~\cite{benth.kruehner.14}, the supremum norm is bounded by the $H_{\alpha}$-norm with a constant 
$c=\sqrt{1+\alpha^{-1}}$. Hence, taking expectations, yield
    \begin{align*}
    \E&\left[\sup_{x\in[0,T-t]}\vert f_k(t,x) - f(t,x) \vert^2 \right] \\
&\qquad \leq c^2\E\left[ \Vert f_k(t) - f_\Pi(t) \Vert_\alpha^2 \right] \\
      &\qquad\leq \frac{3c^2}{k}\left(\E\left[C_1(t)\right]+ C_2\Vert \Pi f_0\Vert_\alpha^2\right) \\
&\qquad\qquad + \frac{3c^2}{k}C_2\left(\int_0^T \E[\tr(\Psi(s)Q\Psi^*(s))]ds + \left(\int_0^T \E\left[\Vert \beta(s)\Vert_\alpha\right] ds\right)^2 \right) \,.   
    \end{align*}
The result follows.
\end{proof}

In electricity and gas markets forward contracts deliver over a future period rather than at a fixed time. The holder of the forward contract receives a uniform stream of electricity 
or gas over an agreed time period $[T_1,T_2]$. The forward prices of delivery period contracts can be derived from a "fixed-delivery time" forward curve model (see Benth et al.~\cite{BSBK-book})
by
\begin{equation}
\label{e:el-forward-def}
F(t,T_1,T_2) := \frac{1}{T_2-T_1}\int_{T_1}^{T_2} f(t,s-t)\,, ds 
\end{equation}
where $f$ is given by the SPDE \eqref{e:HJMM-equation}. The following Corollary adapts Theorem~\ref{t:main statement} to the case of forward
contracts with delivery period.
\begin{cor}\label{k:Forward prices}
 Assume the conditions of Theorem~\ref{t:main statement} and define
  \begin{align*}
    F_k(t,T_1,T_2) &:= \frac{1}{T_2-T_1}\int_{T_1}^{T_2} f_k(t,s-t) ds
  \end{align*}
 for any $0\leq t \leq T_1\leq T_2\leq T$. Then, we have
  $$ F_k(t,T_1,T_2) \rightarrow F(t,T_1,T_2)$$
  for $k\rightarrow\infty$ in $L^2(\Omega)$ where $F$ is given in \eqref{e:el-forward-def}. Furthermore,
$$ 
F_k(t,T_1,T_2) = S_k(t) + \sum_{n=-k}^k G_n(t,T_1,T_2) \left(e^{\lambda_n t} \<g_n^*,f_k(0)\>_{\alpha} 
+ \int_0^t e^{\lambda_n (t-s)}dX_n(s)\right)\,, 
$$
  for any $t\leq T_1\leq T_2\leq T$ where $S_k(t) = \delta_0(f_k(t))$, 
  $$G_n(t,T_1,T_2) = \frac{\exp(\lambda_n(T_2-t))-\exp(\lambda_n(T_1-t))-\lambda_n(T_2-T_1)}{\lambda_n^2\sqrt{T}(T_2-T_1)}$$
  and $X_n(t):=\int_0^t\<\Pi\beta(s)ds + \Pi\Psi(s)dL(s),g_n^*\>_{\alpha}$.
\end{cor}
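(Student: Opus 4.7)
The plan is to deduce both assertions directly from Theorem~\ref{t:main statement}, treating the map that sends a function to its average over $[T_1,T_2]$ as a bounded linear functional on $H_\alpha$. Since $H_\alpha$ embeds continuously into $C_b(\mathbb R_+,\mathbb C)$ by Benth and Kr\"uhner~\cite[Lemma 3.2]{benth.kruehner.14}, the pointwise evaluations in the definition of $F_k$ and $F$ are meaningful. The averaging operator commutes trivially with the finite sum over $n$, and, more importantly, with the stochastic integrals present in Theorem~\ref{t:main statement}(3), since the averaging variable and the stochastic-integration variable are distinct.

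For the $L^2(\Omega)$ convergence, I would apply the Cauchy--Schwarz (or Jensen) inequality to obtain
$$|F_k(t,T_1,T_2) - F(t,T_1,T_2)|^2 \leq \frac{1}{T_2-T_1}\int_{T_1}^{T_2}|f_k(t,s-t) - f(t,s-t)|^2\,ds \leq \sup_{x\in[0,T-t]}|f_k(t,x) - f(t,x)|^2,$$
where the last step uses that $s-t\in[T_1-t,T_2-t]\subseteq[0,T-t]$ throughout the integration. Taking expectations and invoking part~(1) of Theorem~\ref{t:main statement} then yields convergence in $L^2(\Omega)$.

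For the explicit representation, I would substitute the formula from Theorem~\ref{t:main statement}(3) for $f_k(t,s-t)$ into the definition of $F_k(t,T_1,T_2)$ and interchange the order of the (finite) sum over $n$ and the deterministic $ds$-integral. Since the random factor $e^{\lambda_n t}\langle f_k(0),g_n^*\rangle_{\alpha}+\int_0^t e^{\lambda_n(t-r)}dX_n(r)$ does not depend on $s$, it simply pulls out of the $ds$-integration; no stochastic Fubini theorem is needed. What is left is the elementary computation
$$\frac{1}{T_2-T_1}\int_{T_1}^{T_2}g_n(s-t)\,ds = \frac{1}{\lambda_n\sqrt{T}(T_2-T_1)}\int_{T_1-t}^{T_2-t}\bigl(e^{\lambda_n u}-1\bigr)\,du,$$
which after integrating evaluates to $G_n(t,T_1,T_2)$ as stated (using that $\lambda_n\neq 0$ for all $n\in\mathbb Z$, since $\lambda+\alpha/2>0$). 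Combining these observations with $S_k(t)=\delta_0(f_k(t))$ gives the claimed identity.

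No step here is genuinely hard; the only slightly subtle point is the justification of commuting the averaging with the stochastic integral, which however is immediate because the two integrals act on disjoint time variables, reducing the question to a trivial finite-sum interchange.
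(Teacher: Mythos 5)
Your proof is correct and follows essentially the same route as the paper, whose own proof is just a two-sentence sketch citing the uniform $L^2$ convergence from Theorem~\ref{t:main statement}(1) and the representation in part~(3); you merely supply the (correct) details, including the elementary integral evaluating to $G_n(t,T_1,T_2)$.
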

\begin{proof}
  Theorem \ref{t:main statement} yields uniform $L^2$ convergence of the integrands appearing in $F_k$ to the integrand appearing in $F$ and hence the convergence follows. The representation of $F_k$ follows immediately from part (3) of Theorem~\ref{t:main statement}.
\end{proof}
We remark in passing that temperature derivatives market (see e.g. Benth and \v{S}altyt\.{e} Benth~\cite{BSB-weather}) trades in forwards with a "delivery period" as well. 
In this market, the forward is cash-settled against an index of the daily average temperature measured in a city over a given period.

\section{Refinement to Markovian forward price models}\label{s:markovian}
In this Section we refine our analysis to Markovian forward price models, making the
additional assumption that the coefficients $\beta$ and $\Psi$ depend on the state of the forward curve.
More specifically, we assume that
 \begin{align}
    \beta(t) &= b(t,f(t)), \\
    \Psi(t) &= \psi(t,f(t)),
 \end{align}
where $b:\mathbb R_+\times H_\alpha\rightarrow H_\alpha$, $\psi:\mathbb R_+\times H_\alpha\rightarrow L(H_\alpha)$ are measurable Lipschitz-continuous functions of linear growth
in the sense
 \begin{align}
     \| b(t,f) - b(t,g)\|_{\alpha} &\leq C_b \| f-g\|_{\alpha}\,, \label{eq:lip-cond-b} \\
     \Vert (\psi(t,f) - \psi(t,g))\mathcal Q^{1/2} \Vert_{\mathrm{HS}} &\leq C_\psi \| f-g\|_{\alpha}\,, 
 \label{eq:lip-cond-psi} 
  \end{align}
and
 \begin{align}
     \| b(t,f)\|_{\alpha} &\leq C_b(1+ \|f\|_{\alpha})\,, \label{eq:lingrowth-cond-b} \\
     \Vert \psi(t,f)\mathcal Q^{1/2} \Vert_{\mathrm{HS}} &\leq C_\psi (1+ \|f\|_{\alpha})\,, 
 \label{eq:lingrowth-cond-psi} 
  \end{align}
for positive constants $C_b$, $C_\psi$. Under these conditions there exists a unique mild solution $f$ of 
the semilinear SPDE
\begin{equation}
\label{eq:nonlinear_spde}
df(t) = (\partial_xf(t) + b(t,f(t))) dt + \psi(t,f(t-)) dL(t),\quad f(0) = f_0.
\end{equation}
We would like to note that semilinear SPDEs are treated in the book by Peszat and Zabczyk~\cite{peszat.zabczyk.07} and in Tappe~\cite{tappe.12}. Additionally, we assume that
  \begin{align}
    b(t,h) &= b(t,g), \label{eq:struct-cond-b}\\
    \psi(t,h) &= \psi(t,g) \label{eq:struct-cond-psi}\,,
  \end{align}
 for any $h,g\in H_\alpha$ such that $h(x)=g(x)$ for any $x\in[0,T-t]$, i.e.\ the structure of the curve beyond our time horizon $T$ does not influence the 
dynamics of the curve-valued process $f(t)$.

Before continuing our analysis of the arbitrage-free approximation in the Markovian case, we show a couple of useful lemmas. The first states a
version of Doob's $L^2$ inequality for Volterra-like Hilbert space-valued stochastic integrals with respect to the L\'evy process $L$, and is essentially collected from 
Filipovi\'c, Tappe and Teichmann~\cite{FTT}.
\begin{lem}
\label{lem:doob}
Suppose that 
$\Phi\in\mathcal L_L^2(H_{\alpha})$. Then,
\begin{align*} 
\E\left[\sup_{s\in[0,t]} \Vert
\int_0^s\mathcal{U}_{s-r}\Phi(r)\,dL(r)
\Vert_{\alpha}^2\right] &\leq 4c_t^2 \int_0^t \E\left[\Vert\Phi(r)\mathcal Q^{1/2}\Vert_{\text{HS}}^2\right]\,dr\,,
\end{align*}
for $c_t>0$ being at most exponentially growing in $t$. 
\end{lem}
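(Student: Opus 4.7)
The aim is a maximal $L^2$ estimate for the stochastic convolution $X(s):=\int_0^s\mathcal U_{s-r}\Phi(r)\,dL(r)$, $s\in[0,t]$. The obstacle is that $X$ is not a martingale in $s$: the deterministic kernel $\mathcal U_{s-r}$ depends on the upper limit, so Doob's $L^2$-inequality cannot be applied to $X$ directly. The strategy is to reduce the problem to a genuine martingale plus a control coming purely from the semigroup bound.

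The natural martingale to bring in is $M(s):=\int_0^s\Phi(r)\,dL(r)$, which takes values in $H_\alpha$ and is square-integrable by the assumption $\Phi\in\mathcal L_L^2(H_\alpha)$. The Itô-type isometry for Hilbert-space-valued integrals against a mean-zero square-integrable L\'evy process (Peszat--Zabczyk, Theorem~8.7) together with Doob's $L^2$-inequality applied to $M$ gives
\begin{equation*}
\E\Big[\sup_{s\in[0,t]}\|M(s)\|_\alpha^2\Big]\;\le\;4\,\E\|M(t)\|_\alpha^2\;=\;4\int_0^t\E\|\Phi(r)\mathcal Q^{1/2}\|_{\mathrm{HS}}^2\,dr.
\end{equation*}
The factor $4$ on the right-hand side of the target estimate therefore comes from Doob's inequality applied to $M$, while the factor $c_t^2$ will have to absorb the passage from $M$ to $X$.

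The main step, and the one I expect to be the actual obstacle, is the maximal inequality for stochastic convolutions which says that, for a $C_0$-semigroup with $\|\mathcal U_u\|_{\mathrm{op}}\le c_t$ on $[0,t]$,
\begin{equation*}
\E\Big[\sup_{s\in[0,t]}\|X(s)\|_\alpha^2\Big]\;\le\;c_t^2\,\E\Big[\sup_{s\in[0,t]}\|M(s)\|_\alpha^2\Big].
\end{equation*}
This is precisely the Kotelenez-type result adapted to L\'evy noise in Filipovi\'c--Tappe--Teichmann, which I would invoke directly. The standard route for a constructive proof is a Yosida approximation of the generator $\partial_x$: set $A_n:=n(n-\partial_x)^{-1}\partial_x$, a bounded operator with uniformly continuous semigroup $\mathcal U^{(n)}$, and apply It\^o's formula to $\|X^{(n)}(s)\|_\alpha^2$ for the corresponding approximating convolution. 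The resulting martingale part is estimated by Burkholder--Davis--Gundy, the drift part is absorbed using $\|\mathcal U^{(n)}_u\|_{\mathrm{op}}\le Me^{wu}$, and one then lets $n\to\infty$ using strong convergence $\mathcal U^{(n)}_u\to\mathcal U_u$ and dominated convergence under the hypothesis $\Phi\in\mathcal L_L^2(H_\alpha)$. The constant $c_t=\sup_{u\in[0,t]}\|\mathcal U_u\|_{\mathrm{op}}\le Me^{wt}$ is at most exponentially growing in $t$, yielding the claim. The delicate point, as usual in this line of argument, is the uniform-in-$n$ integrability needed to pass to the limit, which is exactly what the FTT framework was designed to handle.
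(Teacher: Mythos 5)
Your plan is correct in outline and lands on the right constant, but the reduction you propose is not the one the paper uses, and the comparison is instructive. You strip the semigroup out entirely, comparing the convolution $X(s)=\int_0^s\mathcal U_{s-r}\Phi(r)\,dL(r)$ with the plain integral $M(s)=\int_0^s\Phi(r)\,dL(r)$ via a Kotelenez/Hausenblas--Seidler type domination $\E[\sup_{s\le t}\|X(s)\|_\alpha^2]\le c_t^2\,\E[\sup_{s\le t}\|M(s)\|_\alpha^2]$, which you would either cite or prove by Yosida approximation, It\^o's formula for $\|X^{(n)}\|_\alpha^2$ and Burkholder--Davis--Gundy. The paper instead extracts from Filipovi\'c--Tappe--Teichmann the one ingredient that makes everything elementary: since $\{\mathcal U_t\}_{t\ge0}$ is pseudo-contractive, their Proposition~8.7 yields a Hilbert space extension $H\supset H_\alpha$ and a $C_0$-group $\{\mathcal V_u\}_{u\in\mathbb R}$ with $\mathcal V_u|_{H_\alpha}=\mathcal U_u$ for $u\ge0$; one then writes $\int_0^s\mathcal U_{s-r}\Phi(r)\,dL(r)=\mathcal V_{s-t}\int_0^s\mathcal U_{t-r}\Phi(r)\,dL(r)$, and the inner integral --- with the kernel frozen at the terminal time $t$ rather than removed --- is a genuine martingale in $s$. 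Doob's $L^2$ inequality applies to it directly, the It\^o isometry produces $\int_0^t\E[\|\mathcal U_{t-r}\Phi(r)\mathcal Q^{1/2}\|_{\text{HS}}^2]\,dr$, and the prefactor $c_t=\sup_{s\in[0,t]}\|\mathcal V_s\|_{\mathrm{op}}\sup_{s\in[0,t]}\|\mathcal U_s\|_{\mathrm{op}}$ is exponentially bounded because any $C_0$-group is. This dilation trick buys a short, self-contained argument with no approximation or limit passage; your route is the standard general machinery for maximal inequalities for stochastic convolutions, but it delegates the hardest step --- precisely the uniform-in-$n$ integrability you flag at the end --- to an external theorem, whereas the paper's factorization dissolves that step entirely.
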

\begin{proof}
Note first that due to Benth and Kr\"uhner~\cite[Lemma 3.5]{benth.kruehner.14} the $C_0$-semigroup $(\mathcal U_t)_{t\geq 0}$ is pseudo-contractive. Filipovi\'c, Tappe and Teichmann~\cite[Prop.~8.7]{FTT} state that there is a Hilbert space extension $H$
of $H_\alpha$ (i.e.\ $H$ is a Hilbert space and $H_\alpha$ is its
subspace and the norm of $H_\alpha$ equals the norm of $H$ restricted to
$H_\alpha$) and a $C_0$-group $(\mathcal V_t)_{t\in\mathbb R}$ on $H$ such that
$\mathcal V_t\vert_{H_\alpha} = \mathcal U_t$ for $t\geq 0$. Then, we have
\begin{align*}
\sup_{s\in[0,t]} \Vert\int_0^s\mathcal{U}_{s-r}\Phi(r)\,dL(r)\Vert_{\alpha}&\leq \sup_{s\in[0,t]}\Vert \mathcal V_{s-t}\Vert_{\mathrm{op}}
\Vert\int_0^s \mathcal U_{t-r}\Phi(r)\,dL(r) \Vert_{\alpha} \\
&\leq \sup_{s\in[0,t]}\Vert \mathcal V_{s}\Vert_{\mathrm{op}}
\sup_{s\in[0,t]}\Vert\int_0^s \mathcal U_{t-r}\Phi(r)\,dL(r) \Vert_{\alpha}\,.
\end{align*}
Thus, by Doob's maximal inequality, Thm.~2.2.7 in 
Prevot and R\"ockner~\cite{PR}, we find
\begin{align*} 
\E&\left[\sup_{s\in[0,t]} \Vert
\int_0^s\mathcal{U}_{s-r}\Phi(r)\,dL(r)
\Vert_{\alpha}^2\right]   \\
&\qquad\qquad\leq \sup_{s\in[0,t]}\Vert \mathcal
V_{s}\Vert_{\mathrm{op}}^2\E\left[\sup_{s\in[0,t]} \Vert
\int_0^s\mathcal{U}_{t-r}\Phi(r)\,dL(r)\Vert_{\alpha}^2\right]  \\
&\qquad\qquad\leq 4\sup_{s\in[0,t]}\Vert \mathcal
V_{s}\Vert_{\mathrm{op}}^2\E\left[\Vert\int_0^t\mathcal{U}_{t-r}\Phi(r)\,dL(r)\Vert_{\alpha}^2\right]  \\
&\qquad\qquad=4\sup_{s\in[0,t]}\Vert \mathcal
V_{s}\Vert_{\mathrm{op}}^2\int_0^t\E\left[\|\mathcal U_{t-r}\Phi(r)\mathcal{Q}^{1/2}\Vert^2_{\text{HS}}\right]\,dr \\
&\qquad\qquad\leq 4\sup_{s\in[0,t]}\Vert \mathcal
V_{s}\Vert_{\mathrm{op}}^2\sup_{s\in[0,t]}\|\mathcal U_{s}\Vert_{\text{op}}^2\int_0^t\E\left[ \Vert\Phi(r)\mathcal{Q}^{1/2}\Vert^2_{\text{HS}}\right]\,dr 
\end{align*}
This proves the Lemma by letting $c_t=\sup_{s\in[0,t]}\Vert \mathcal
V_{s}\Vert_{\mathrm{op}}\sup_{0\leq s\leq t}\Vert\mathcal{U}_{s}\Vert_{\text{op}}$
and recalling that any $C_0$-group is bounded in operator norm by
an exponentially increasing function in $t$. Hence, $c_t\leq c\exp(w t)$ for some
constants $c,w>0$. 
\end{proof}
We remark in passing that the above result holds for any pseudo-contractive semigroup 
$\mathcal S_t$, $t\geq 0$. 
 
The next lemma is a useful technical result on the distance between processes and the fixed point of an integral operator defined
via the mild solution of \eqref{eq:nonlinear_spde}. The lemma plays a crucial role in showing that certain arbitrage-free approximations of 
\eqref{eq:nonlinear_spde} converge to the right limit. 
\begin{lem}\label{l:fixpoint estimate}
For an $H_{\alpha}$-valued adapted and c\`adl\`ag stochastic process $h$, define
$$ 
V(h)(t) := \mathcal U_tf_0 + \int_0^t \mathcal U_{t-s}b(s,h(s))\,ds + \int_0^t \mathcal U_{t-s}\psi(s,h(s-))\,dL(s)\,,
$$
for any $t\geq 0$. Then, $V$ has a fixed point $\widehat{f}$ and it holds
$$ 
\E\left[\sup_{0\leq s\leq t}\| h(s)-\widehat{f}(s)\|^2_{\alpha} \right] \leq\frac{\pi^2}{6}\exp(4C_t)\E\left[\sup_{0\leq s\leq t}\| V(h)(s)-h(s)\|^2_{\alpha}\right] \,,
$$
for any $t\geq 0$ and any $H_{\alpha}$-valued adapted c\`adl\`ag stochastic processes $h$, with
$C_t$ being a positive constant depending on $t$. 
\end{lem}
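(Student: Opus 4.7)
The plan is to carry out a Picard iteration. The fixed point $\widehat f$ is the unique mild solution of \eqref{eq:nonlinear_spde}, whose existence under \eqref{eq:lip-cond-b}--\eqref{eq:lingrowth-cond-psi} is standard (cf.\ Peszat and Zabczyk~\cite{peszat.zabczyk.07} or Tappe~\cite{tappe.12}). Given an arbitrary adapted c\`adl\`ag $h$, I will set $h_0 := h$, iterate $h_{n+1} := V(h_n)$, and use the estimates below to deduce that $h_n \to \widehat f$ in $L^2(\Omega, C([0,t]; H_\alpha))$, so that $\widehat f - h = \sum_{n \geq 0}(h_{n+1} - h_n)$ in this space.

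The main technical step will be the inductive estimate
$$a_n := \E\bigl[\sup_{s \leq t}\|h_{n+1}(s) - h_n(s)\|_\alpha^2\bigr] \leq \frac{C_t^n}{n!}\, G(t), \qquad G(t) := \E\bigl[\sup_{s \leq t}\|V(h)(s) - h(s)\|_\alpha^2\bigr],$$
for a constant $C_t$ depending on $t$, $C_b$, $C_\psi$, $C_{\mathcal U}$, and the Doob constant $c_t$ from Lemma~\ref{lem:doob}. For $n \geq 1$ one writes $h_{n+1}(s) - h_n(s) = V(h_n)(s) - V(h_{n-1})(s)$ and splits this into a Bochner and a stochastic piece: the Bochner integral is estimated via the semigroup bound $\|\mathcal U_\cdot\|_{\mathrm{op}} \leq C_{\mathcal U}$, the Lipschitz condition \eqref{eq:lip-cond-b}, and Cauchy--Schwarz in time, while the stochastic integral is controlled using Lemma~\ref{lem:doob} together with \eqref{eq:lip-cond-psi}. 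Both pieces reduce to a single constant $K_t$ times $\int_0^t \E[\sup_{u \leq r}\|h_n(u) - h_{n-1}(u)\|_\alpha^2]\, dr$, and iterating this bound $n$ times collapses the nested time integrations to $t^n/n!$, so $a_n \leq (K_t t)^n/n! \cdot G(t)$; setting $C_t := K_t t$ suffices.

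For the final assembly I apply Minkowski's inequality in $L^2(\Omega, C([0,t]; H_\alpha))$ to obtain $\sqrt{\E[\sup_{s \leq t}\|h(s) - \widehat f(s)\|_\alpha^2]} \leq \sum_{n=0}^\infty \sqrt{a_n}$, and then the Cauchy--Schwarz inequality in $\ell^2$ with weights $(n+1)^{-2}$, using $\sum_{n \geq 1} n^{-2} = \pi^2/6$:
$$\Bigl(\sum_{n \geq 0}\sqrt{a_n}\Bigr)^2 \leq \frac{\pi^2}{6} \sum_{n \geq 0}(n+1)^2 a_n \leq \frac{\pi^2}{6}\, G(t) \sum_{n \geq 0}\frac{(n+1)^2 C_t^n}{n!} = \frac{\pi^2}{6}(C_t^2 + 3 C_t + 1)\, e^{C_t}\, G(t).$$
A short check, comparing values and derivatives at zero, gives $x^2 + 3x + 1 \leq e^{3x}$ for $x \geq 0$, hence $(C_t^2 + 3 C_t + 1)\, e^{C_t} \leq e^{4 C_t}$, which yields the claimed bound.

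The main obstacle is the bookkeeping in the inductive step: the Doob constant $c_t$ is itself only at most exponential in $t$, and one needs to absorb it together with the semigroup bound and the Lipschitz constants into a single $C_t$ while preserving the $1/n!$ coming from the nested time integration. Once that is in place, the weighted Cauchy--Schwarz above is precisely the device that produces the characteristic $\pi^2/6$ prefactor and converts the polynomial $C_t^2 + 3 C_t + 1$ into the clean exponential $e^{3 C_t}$.
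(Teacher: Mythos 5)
Your proposal is correct and follows essentially the same route as the paper: Picard iteration with the factorial contraction estimate $C_t^n/n!$ (obtained from the semigroup bound, the Lipschitz conditions, Cauchy--Schwarz in time and Lemma~\ref{lem:doob}), a telescoping sum over the iterates, and the weighted Cauchy--Schwarz device with $\sum_{n\geq 1} n^{-2}=\pi^2/6$ yielding the prefactor and the constant $e^{4C_t}$. The only cosmetic difference is in bounding $\sum_{n\geq 0}(n+1)^2C_t^n/n!$: the paper uses $(n+1)\leq 2^n$ termwise, while you evaluate the sum in closed form as $(C_t^2+3C_t+1)e^{C_t}$ and invoke $x^2+3x+1\leq e^{3x}$ --- both give the same bound $e^{4C_t}$.
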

\begin{proof}
If $h$ is an adapted c\`adl\`ag $H_{\alpha}$-valued 
stochastic process such that $\mathbb E[\int_0^t\|h(s)\|_{\alpha}^2\,ds]<\infty$, then from the linear growth assumption~\eqref{eq:lingrowth-cond-b}
on $b$ we find 
\begin{align*}
\mathbb E[\int_0^t\|\mathcal U_{t-s}b(s,h(s))\|_{\alpha}\,ds] &\leq C_b e^{wt}(t+\mathbb E[\int_0^t\|h(s)\|_{\alpha}\,ds]) \\
&\leq C_b e^{wt}(t+\sqrt{t}\mathbb E[\int_0^t\|h(s)\|_{\alpha}^2\,ds]^{1/2})\\
&<\infty\,.
\end{align*}
Furthermore, from the linear growth condition~\eqref{eq:lingrowth-cond-psi} 
on $\psi$
$$
\mathbb E[\int_0^t\|\mathcal{U}_{t-s}\psi(s,h(s))\|^2_{\alpha}\,ds]\leq 2C^2_{\psi}e^{2wt}\left(t+\mathbb E[\int_0^t\|h(s)\|^2_{\alpha}\,ds]\right)<\infty\,.
$$
Hence, $V(h)$ is well-defined, and it is an adapted c\`adl\`ag process. By a straightforward 
estimation using again the linear growth of $b$ and $\psi$, we find similarly that
$$
\mathbb E[\int_0^t\|V(h)(s)\|_{\alpha}^2\,ds]\leq C_t\left(1+\mathbb E[\int_0^t\|h\|_{\alpha}^2\,ds]\right)<\infty\,,
$$
for some constant $C_t>0$
Therefore, $V$ maps into its own domain and, thus, can be iterated.

We note that by general theory, the SPDE
$$
df(t)=\partial_xf(t)\,dt+b(t,f(t))\,dt+\psi(t,f(t-))\,dL(t)
$$
has a unique mild solution $\widehat f$ which has a c\`adl\`ag modification, cf.\ Tappe~\cite[Theorem 4.5, Remark 4.6]{tappe.12}. By definition of mild solutions, we see that $\widehat f$ 
is a fix point for
$V$, i.e., $V(\widehat f)=\widehat f$. 

Let $g,h$ be $H_{\alpha}$-valued adapted c\`adl\`ag stochastic processes and $t\geq 0$. Then, we have
\begin{align*}
\E&\left[\sup_{0\leq s\leq t}\|V(h)(s)-V(g)(s)\|^2_{\alpha}\right] \\
&\qquad\qquad\leq 2\E\left[\sup_{0\leq s\leq t}\|\int_0^s\mathcal U_{s-r}\left(b(r,h(r))-b(r,g(r))\right)\,dr\|_{\alpha}^2\right] \\
&\qquad\qquad\qquad+2\E\left[\sup_{0\leq s\leq t}\|\int_0^s\mathcal U_{s-r}\left(\psi(r,h(r-))-\psi(r,g(r-))\right)\,dL(r)\|^2_{\alpha}\right] \,.
\end{align*}
Consider the first term on the right hand side of the inequality. By the norm inequality for Bochner integrals and Lipschitz continuity of $b$ in 
\eqref{eq:lip-cond-b}, we find
\begin{align*}
\E&\left[\sup_{0\leq s\leq t}\|\int_0^s\mathcal U_{s-r}\left(b(r,h(r))-b(r,g(r))\right)\,dr\|_{\alpha}^2\right]  \\
&\qquad\qquad\leq\E\left[\sup_{0\leq s\leq t}\left(\int_0^s\|\mathcal U_{s-r}\|_{\text{op}}\|b(r,h(r))-b(r,g(r))\|_{\alpha}\,dr\right)^2\right] \\
&\qquad\qquad\leq t\E\left[\sup_{0\leq s\leq t}\int_0^s\|\mathcal U_{s-r}\|^2_{\text{op}}\|b(r,h(r))-b(r,g(r))\|^2_{\alpha}\,dr\right] \\
&\qquad\qquad\leq t^2\sup_{0\leq s\leq t}\|\mathcal{U}_s\|^2_{\text{op}}\E\left[\int_0^t\|b(r,h(r))-b(r,g(r))\|_{\alpha}^2\,dr\right] \\
&\qquad\qquad\leq t^2C_b^2\sup_{0\leq s\leq t}\|\mathcal U_s\|^2_{\text{op}}\int_0^t\E\left[\|h(r)-g(r)\|_{\alpha}^2\right]\,dr\,,
\end{align*} 
where we have applied Cauchy-Schwartz' inequality. Recall that since
$\mathcal U_t$ is a pseudo-contractive semigroup, we find for some
$w>0$, it holds that $\sup_{0\leq s\leq t}\|\mathcal U_{s}\|_{\text{op}}^2\leq \exp(2w t)<\infty$.

For the second term, we find by appealing to Lemma~\ref{lem:doob} and the Lipschitz continuity in \eqref{eq:lip-cond-psi} of $\psi$,
\begin{align*}
\E&\left[\sup_{0\leq s\leq t}\|\int_0^s\mathcal U_{s-r}\left(\psi(r,h(r-))-\psi(r,g(r-))\right)\,dL(r)\|^2_{\alpha}\right] \\
&\qquad\qquad\leq 4c_t^2\int_0^t\E\left[\Vert(\psi(r,h(r))-\psi(r,g(r)))\mathcal Q^{1/2}\Vert^2_{\text{HS}}\right]\,dr \\
&\qquad\qquad\leq 4c_t^2C_{\psi}^2\int_0^t\E\left[\Vert h(r)-g(r)\Vert^2_{\alpha}\right]\,dr 
\end{align*}
Here, the constant $c_t$ is from Lemma~\ref{lem:doob}. Denote by $C_t$ the constant
$$
C_t:=2C^2_b t^2 \sup_{s\in[0,t]}\Vert \mathcal U_s\Vert_{\mathrm{op}}+ 8c^2_tC_\psi^2t\,.
$$
Then, we have
\begin{align*}
  \E&\left[\sup_{0\leq s\leq t}\| V^n(h)(s)-V^n(g)(s)\|^2_{\alpha}\right] \\ 
&\qquad\qquad\leq C_t\int_0^t \E\left[\|V^{n-1}(h)(s_1)-V^{n-1}(g)(s_1)\|^2_{\alpha}\right] \,ds_1\\
   &\qquad\qquad\leq C_t^n \int_0^t\int_0^{s_1}\cdots\int_0^{s_{n-1}}\E\left[\|h(s_n)-g(s_n)\|_{\alpha}^2\right]ds_n\dots ds_1 \\
   &\qquad\qquad\leq \frac{C_t^n}{n!}\E\left[\sup_{0\leq s\leq t}\|h(s)-g(s)\|^2_{\alpha} \right]\,,
\end{align*}
for any $n\in\mathbb N$. Denote by $L_{a}^2(\Omega,D([0,t],H_\alpha))$ the space of 
$H_{\alpha}$-valued adapted c\`adl\`ag stochastic processes 
$\{f(s)\}_{s\in[0,t]}$ for which $\E[\sup_{s\in[0,t]}\|f(s)\|_\alpha^2]<\infty$.
Equip this space with the norm $\|\cdot\|_t$ defined by
$$ 
\|f\|_t^2 := \E[\sup_{s\in[0,t]}\|f(s)\|_\alpha^2]
$$
for $f\in L_{a}^2(\Omega,D([0,t],H_\alpha))$. From the estimation above, we see
that $V$ operates on the normed space $L_{a}^2(\Omega,D([0,t],H_\alpha))$.
Moreover, $V^n$ is Lipschitz continuous with constant strictly less than $1$ for $n$ sufficiently large. Thus, by Banach's fixed point theorem there is at most one 
fixed point for $V$. Hence, $\hat f$ is the unique fix point for $V$. Furthermore, we have
\begin{align*}
\E\left[\sup_{0\leq s\leq t}\|V^n(h)(s)-h(s)\|^2_{\alpha}\right]^{1/2} 
& \leq \sum_{k=0}^{n-1} \E\left[\sup_{0\leq s\leq t}\|V^{k+1}(h)(s)-V^k(h)(s)\|^2_{\alpha}\right]^{1/2} \\
                                        & \leq \E\left[\sup_{0\leq s\leq t}\| V(h)(s)-h(s)\|^2_{\alpha} \right]^{1/2}\sum_{k=0}^{n-1} \left(\frac{C_t^k}{k!}\right)^{1/2}\,.
\end{align*}
From Cauchy-Schwartz' inequality and we have that
\begin{align*}
\sum_{k=0}^{n-1} \left(\frac{C_t^k}{k!}\right)^{1/2}&=\sum_{k=0}^{n-1}(k+1)^{-1}\left(\frac{(k+1)^2C_t^k}{k!}\right)^{1/2} \\
&\leq\left(\sum_{k=0}^{n-1}\frac1{(k+1)^2}\right)^{1/2}\left(\sum_{k=0}^{n-1}\frac{(k+1)^2C_t^k}{k!}\right)^{1/2} \\
&\leq \frac{\pi}{\sqrt{6}}\left(\sum_{k=0}^{n-1}\frac{4^kC_t^k}{k!}\right)^{1/2} \\
&\leq \frac{\pi}{\sqrt{6}}\exp(2C_t)\,, 
\end{align*}
where we have used the elementary inequality $k+1\leq 2^k$, $k\in\mathbb N$. 
\end{proof}

Let us define the Lipschitz continuous functions $b_\Pi:=\Pi\circ b$ and $\psi_\Pi:=\Pi\circ\psi$. Then, 
Tappe~\cite[Theorem 4.5]{tappe.12} yields a mild solution $f_\Pi$ for the SPDE
\begin{equation}
df_\Pi(t) = (\partial_xf_\Pi(t) + b_\Pi(t,f_\Pi(t)))\, dt + \psi_\Pi(t,f_\Pi(t-))\,dL(t),\quad f_\Pi(0) = \Pi f_0\,. 
\end{equation}
 Furthermore, it will be convenient to use the notations
  \begin{align}
    b_k(t,h) := \Lambda_k(b(t,h)), \\
    \psi_k(t,h) := \Lambda_k(\psi(t,h))
  \end{align}
 for any $h\in H_\alpha$, $t\geq0$.

In the proof of Theorem~\ref{t:main statement} we compared the solution $f$ to the projected solution $\Pi f$ which are essentially the same due to properties of $\Pi$. Then we compared $\Pi f$ to $f_\Pi$ which again had been essentially the same. Finally, we compared $\Pi_kf_\Pi$ to solutions of the projected SPDE where the difference was given by a certain Lie-commutator. 
However, in the Markovian setting we want to change the dependencies of the coefficients as well, which complicates the proof of the approximation result.
\begin{thm}
\label{thm:main-markovian}
Denote by $\widehat f_k$ be the mild solution of the SPDE
$$ 
d\widehat f_k(t) = (\partial_x\widehat f_k(t) + b_k(t,\widehat f_k(t)))\,dt + \psi_k(t,\widehat f_k(t-))\,dL(t),\quad \widehat f_k(0) = \Lambda_k f_0, t\geq0\,.
$$
Then, $\widehat f_k\in H_{\alpha}^{T,k}$ is a strong solution, and we have
  $$ \E\left[\sup_{t\in[0,T],x\in[0,T-t]} \vert \hat f_k(t,x)-f(t,x) \vert^2 \right] \rightarrow 0 $$
  for $k\rightarrow \infty$.
\end{thm}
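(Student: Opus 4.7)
The strategy splits into three parts. First, $\widehat f_k$ takes values in $H_\alpha^{T,k}$ and is a strong solution: by Proposition~\ref{p:U and the riesz basis}, $\mathcal U_t g_*=g_*$ and $\mathcal U_t g_n=e^{\lambda_n t}g_n+g_n(t)g_*$, so $\mathcal U_t$ preserves $H_\alpha^{T,k}$; combined with $\Lambda_k=\Pi_k\Pi$ having range $H_\alpha^{T,k}$, the mild solution stays in this finite-dimensional subspace, on which $\partial_x$ acts boundedly, which automatically upgrades the mild solution to a strong one.

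Second, introduce a $k$-independent intermediate process $f_\Pi$, defined as the unique mild solution (existing by Tappe~\cite{tappe.12}) of the SPDE with coefficients $b_\Pi:=\Pi\circ b$, $\psi_\Pi:=\Pi\circ\psi$ and initial condition $\Pi f_0$. The goal here is to establish $f_\Pi(t,x)=f(t,x)$ for $x\in[0,T-t]$, $t\in[0,T]$. This is exactly where the structural conditions \eqref{eq:struct-cond-b}--\eqref{eq:struct-cond-psi} come in: together with $(\Pi h)(x)=h(x)$ for $x\in[0,T]$ (Corollary~\ref{k:Riesz basis on H}), they let a Picard iteration close, with the $n$-th iterates defining $f$ and $f_\Pi$ agreeing on the shrinking triangle at each step; the claim then follows by uniqueness of the mild solution.

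Third, estimate $\phi(t):=\E[\sup_{s\leq t}\|\widehat f_k(s)-f_\Pi(s)\|_\alpha^2]$ by a Gronwall argument. Expanding $\widehat f_k(t)-f_\Pi(t)$ via the mild formulas and splitting $\Lambda_k b(s,\widehat f_k(s))-\Pi b(s,f_\Pi(s))=\Lambda_k[b(s,\widehat f_k(s))-b(s,f_\Pi(s))]+(\Lambda_k-\Pi)b(s,f_\Pi(s))$ (analogously for $\psi$) yields five summands. The two Lipschitz-type summands contribute at most $C(T)\int_0^t\phi(s)\,ds$ via Cauchy-Schwartz, Lemma~\ref{lem:doob}, the Lipschitz bounds \eqref{eq:lip-cond-b}--\eqref{eq:lip-cond-psi}, and the uniform bound $\sup_k\|\Lambda_k\|_\mathrm{op}<\infty$ coming from Proposition~\ref{l:commutator of U and projectors}. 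The remaining three summands carry the factor $\Lambda_k-\Pi=(\Pi_k-I)\Pi$ applied to the $k$-independent quantities $f_0$, $b(s,f_\Pi(s))$, $\psi(s,f_\Pi(s-))$; the strong convergence $\Pi_k\to I$ on $H_\alpha^T$ (Proposition~\ref{l:commutator of U and projectors}), combined with the linear growth \eqref{eq:lingrowth-cond-b}--\eqref{eq:lingrowth-cond-psi} and standard moment estimates on $f_\Pi$, forces each of them to zero via dominated convergence (for the stochastic summand, using the Hilbert-Schmidt expansion along an orthonormal basis), uniformly in $t\in[0,T]$. Gronwall then delivers $\phi(T)\to 0$, and the conclusion follows by combining the sup-norm domination $\sup_{x\in[0,T-t]}|h(x)|^2\leq(1+\alpha^{-1})\|h\|_\alpha^2$ from Benth and Kr\"uhner~\cite[Lemma~3.2]{benth.kruehner.14} with the identification $f_\Pi=f$ on the triangle established in the second step.

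The main obstacle is the second step: because $\Pi f_0$ and $f_0$ generally disagree beyond $[0,T]$, and $\mathcal U_t$ continually rotates such ``outside'' values into the truncation region, one cannot just truncate $f$ and declare it a mild solution of the projected SPDE; the structural conditions \eqref{eq:struct-cond-b}--\eqref{eq:struct-cond-psi} are precisely what is needed to close the Picard iteration. An alternative is to apply the fixed-point estimate Lemma~\ref{l:fixpoint estimate} with $V=V_\Pi$, which gives a direct comparison of $\widehat f_k$ with $f_\Pi$ at the cost of having to handle the $k$-dependent residual $(\Pi_k-I)V_\Pi(\widehat f_k)$, where the Lie-commutator bounds of Proposition~\ref{l:commutator of U and projectors} become essential.
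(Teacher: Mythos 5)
Your proposal is correct and follows essentially the same route as the paper: the paper likewise introduces the intermediate process $f_\Pi$, identifies it with $f$ on the triangle $\{(t,x): t\in[0,T],\,x\in[0,T-t]\}$ via the structural conditions \eqref{eq:struct-cond-b}--\eqref{eq:struct-cond-psi}, reduces the error between $\widehat f_k$ and $f_\Pi$ to $(\Pi_k-\mathcal I)$ applied to the $k$-independent quantities $b_\Pi(r,f_\Pi(r))$ and $\psi_\Pi(r,f_\Pi(r))\mathcal Q^{1/2}$, which vanish by dominated convergence, and finishes with the sup-norm domination. The only cosmetic difference is that where you run an explicit Gronwall iteration on $\E[\sup_{s\leq t}\|\widehat f_k(s)-f_\Pi(s)\|_\alpha^2]$, the paper invokes the fixed-point estimate of Lemma~\ref{l:fixpoint estimate} with $V=V_k$ and $h=f_\Pi$, so that only the one-step residual $V_k(f_\Pi)-f_\Pi$ must be estimated --- precisely the alternative you flag in your closing remark.
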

\begin{proof}
First we note that a unique mild solution $\widehat{f}_k$ of the SPDE exists due to Tappe~\cite[Theorem 4.5]{tappe.12}.
Define
$$ 
V_k(h)(t) := \mathcal U_t f_k(0) + \int_0^t \mathcal U_{t-s} (b_k(s,h(s))\,ds + \psi_k(s,h(s-))\,dL(s))\,, 
$$
for any $k\in\mathbb N$, $t\geq 0$ and any adapted c\`adl\`ag stochastic process $h$ in $H_{\alpha}$. 
Let $f_k$ be
defined as
\begin{align*}
    f_k(t): &= \mathcal U_t f_k(0) + \int_0^t \mathcal U_{t-s} (b_k(s,f(s))\,ds + \psi_k(s,f(s))\,dL(s) \\
           &= \mathcal U_t f_k(0) + \int_0^t \mathcal U_{t-s} (b_k(s,f_\Pi(s))\,ds + \psi_k(s,f_\Pi(s-))\,dL(s) \\
           &= V_k(f_\Pi)(t)\,,
\end{align*}
for $f_k(0)=\Lambda_kf(0)$. Moreover, $\widehat f_k(t) = V_k(\widehat f_k)(t)$. By Lemma~\ref{l:fixpoint estimate}, it holds
  \begin{align*}
\E\left[\sup_{0\leq s\leq t}\| f_\Pi(t) - \hat f_k(t) \|_\alpha^2\right]\leq\frac{\pi^2}{6}\exp(4C_t) \E\left[ \sup_{0\leq s\leq t}\| f_k(s)-f_\Pi(s)\|^ 2_\alpha\right]\,,
  \end{align*}
  for any $k\in \mathbb N$, $t\geq 0$ and $C_t$ given in the lemma (recall from Section~\ref{s:the model} that the operator norm of
the shift semigroup $\mathcal{U}_t$ is uniformly bounded by the constant $C_{\mathcal{U}}$).  
By the definition of $f_k$ and $f_{\Pi}$ we find
\begin{align*}
\|f_k(s)-f_{\Pi}(s)\|_{\alpha}^2&\leq2\|\int_0^s\U_{s-r}(b_k(r,f_{\Pi}(r))-b_{\Pi}(r,f_{\Pi}(r)))\,dr\|_{\alpha}^2 \\
&\qquad+2\|\int_0^s\U_{s-r}(\psi_k(r,f_{\Pi}(r-))-\psi_{\Pi}(r,f_{\Pi}(r-)))\,dL(r)\|_{\alpha}^2\,.
\end{align*}
Consider the first term on the right-hand side of the inequality. By the norm inequality for Bochner integrals, Cauchy-Schwartz' inequality and 
boundedness of the operator norm of
$\U_t$ we find (for $s\leq t$)
\begin{align*}
\|\int_0^s\U_{s-r}&(b_k(r,f_{\Pi}(r))-b_{\Pi}(r,f_{\Pi}(r)))\,dr\|_{\alpha}^2 \\
&\qquad\qquad\leq\left(\int_0^s\|\U_{s-r}(b_k(r,f_{\Pi}(r))-b_{\Pi}(r,f_{\Pi}(r)))\|_{\alpha}\,dr\right)^2 \\
&\qquad\qquad\leq t\int_0^t\|\U_{s-r}(b_k(r,f_{\Pi}(r))-b_{\Pi}(r,f_{\Pi}(r)))\|^2_{\alpha}\,dr \\
&\qquad\qquad\leq tC^2_{\U}\int_0^t\|b_k(r,f_{\Pi}(r))-b_{\Pi}(r,f_{\Pi}(r))\|_{\alpha}^2\,dr \\
&\qquad\qquad\leq tC_{\U}^2\int_0^t\|(\Pi_k-\mathcal I)b_{\Pi}(r,f_{\Pi}(r))\|_{\alpha}^2\,dr
\end{align*}
Here, $\mathcal I$ denotes the identity operator on $H_\alpha^T$. Hence, using
Lemma~\ref{lem:doob} and the fact that $\{\mathcal U\}_{t\geq 0}$ is pseudo-contractive,
  \begin{align*}
    \E&\left[ \sup_{0\leq s\leq t}\Vert f_k(s)-f_\Pi(s)\Vert^ 2_\alpha\right]  \\
&\qquad\leq 2  tC_{\U}^2\int_0^t\E\left[\|(\Pi_k-\mathcal I)b_{\Pi}(r,f_{\Pi}(r))\|_{\alpha}^2\right]\,dr \\ 
    &\qquad\qquad + 2\E\left[\sup_{0\leq s\leq t}\Vert\int_0^s \U_{s-r}(\psi_k(r,f_\Pi(r-))-\psi_\Pi(r,f_\Pi(r-)))\,dL(r)\|_{\alpha}^2\right] \\
&\qquad\leq 2  tC_{\U}^2\int_0^t\E\left[\|(\Pi_k-\mathcal I)b_{\Pi}(r,f_{\Pi}(r))\|_{\alpha}^2\right]\,dr \\ 
    &\qquad\qquad + 8c_t^2\int_0^t\E\left[\|(\psi_k(r,f_\Pi(r))-\psi_\Pi(r,f_\Pi(r)))\mathcal Q^{1/2}\|_{\text{HS}}^2\right]\,dr \\
&\qquad\leq 2  tC_{\U}^2\int_0^t\E\left[\|(\Pi_k-\mathcal I)b_{\Pi}(r,f_{\Pi}(r))\|_{\alpha}^2\right]\,dr \\ 
    &\qquad\qquad + 8c_t^2\int_0^t\E\left[\|(\Pi_k-\mathcal I)\psi_\Pi(r,f_\Pi(r))\mathcal Q^{1/2}\|_{\text{HS}}^2\right]\,dr \,.
\end{align*}
Denote by
\begin{align*}
K_t(k):&=2  tC_{\U}^2\int_0^t\E\left[\|(\Pi_k-\mathcal I)b_{\Pi}(r,f_{\Pi}(r))\|_{\alpha}^2\right]\,dr \\
&\qquad+8c_t^2\int_0^t\E\left[\|(\Pi_k-\mathcal I)\psi_\Pi(r,f_\Pi(r))\mathcal Q^{1/2}\|_{\text{HS}}^2\right]\,dr \,,
\end{align*}
for $k\in\mathbb N$. By standard norm inequalities, we have
\begin{align*}
K_t(k):&=4  tC_{\U}^2(1+\|\Pi_k\|^2_{\text{op}})\int_0^t\E\left[\|b_{\Pi}(r,f_{\Pi}(r))\|_{\alpha}^2\right]\,dr \\
&\qquad+16c_t^2(1+\|\Pi_k\|^2_{\text{op}})\int_0^t\E\left[\|\psi_\Pi(r,f_\Pi(r))\|_{\text{op}}^2\right]\,dr \,,
\end{align*}
which is seen to be bounded uniformly in $k\in\mathbb N$ from 
Proposition~\ref{l:commutator of U and projectors}. 
Hence, we have $ K_{t}(k) \rightarrow 0$ for $k\rightarrow \infty$ and any $t\geq 0$ by the dominated convergence theorem because $(\Pi_k-\mathcal I)h\rightarrow 0$ for $k\rightarrow \infty$ and any $h\in H_\alpha^T$.
Thus, we find 
$$
\E\left[\sup_{0\leq t\leq T}\Vert f_k(t) - \hat f_k(t) \Vert_\alpha^2\right] \rightarrow 0\,,
$$ 
for $k\rightarrow \infty$. Finally, $f_\Pi(t,x) = f(t,x)$ for any $t\in[0,T]$, $x\in[0,T-t]$. Moreover,
from Lemma~3.2 in Benth and Kr\"uhner~\cite{benth.kruehner.14} the sup-norm is dominated by the 
$H_{\alpha}$-norm, and therefore we have
 \begin{align*}
   \E\left[\sup_{t\in[0,T],x\in[T-t]}\vert \hat f_k(t,x)- f(t,x)\vert^2\right] &\leq c \E\left[\sup_{0\leq t\leq T}\Vert \hat f_k(t)-f_\Pi(t) \Vert_\alpha^2\right]\rightarrow0\,,
 \end{align*}
 for $k\rightarrow \infty$.
The Proposition follows.
\end{proof}
The philosophy in Thm.~\ref{thm:main-markovian} is to take $f(t)$ as the actual forward curve dynamics, and study 
finite dimensional approximations $\widehat f_k(t)$ of it. By construction, $\widehat f_k$ solves a HJMM dynamics which yields
that the approximating forward curves become arbitrage-free. From the main theorem, the approximations
$\widehat f_k(t)$ converge uniformly to $f(t)$ for $x\in[0,T-t]$. As time $t$ progresses, the times to maturity $x\geq 0$ 
for which we obtain convergence shrink. The reason is that information of $f$ is transported 
to the left in the dynamics of the SPDE. We recall that the approximation of $f$ is constructed by first 
localizing $f$ to $x\in[0,T]$ for a fixed time horizon $T$ by the projection operator $\Pi$ down
to $H_{\alpha}^T$, and next creating finite-dimensional approximations of this.

Alternatively, we may use $f_{\Pi}(t)$ as our forward price model. Then, the finite dimensional 
approximation $f_k(t)$ will converge uniformly over all $x\in[0,T]$. In practice, there will be a time
horizon for the futures market for which we have no information. For example, in liberalized power markets
like NordPool and EEX, there are no futures contracts traded with settlement beyond 6 years. Hence,
it is a delicate task to model the dynamics of the futures price curve beyond this horizon. The alternative is
then clearly to restrict the modelling perspective to the dynamics with the maturities confined in 
$x\in[0,T]$.  
Indeed, in such a context the
structural conditions \eqref{eq:struct-cond-b} and \eqref{eq:struct-cond-psi} will be trivially satisfied as
we restrict our model parameters in any case to the behaviour on $x\in[0,T]$. 

We end our paper with a short discussion on a possible numerical implementation of $\widehat f_k(t)$, the 
finite-dimensional approximation of $f(t)$. Since $\widehat f_k(t)\in H_{\alpha}^{T,k}$, we can express it as
$$
\widehat f_k(t)=\widehat f_{k,*}(t)+\sum_{n=-k}^kg_n\widehat f_{k,n}(t)\,,
$$
where $\widehat f_{k,*}(t)=\widehat f_k(t,0)g_*$ and $\widehat f_{k,n}(t)=\<\widehat f_k(t),g_n^*\>_{\alpha}$ are $\mathbb C$-valued functions. For any $h\in H_{\alpha}^{T,k}$ it follows that $b_k(t,h)\in H_{\alpha}^{T,k}$. Define for $n=-k,\ldots,k$ the functions
\begin{align*}
\overline{b}_{k,n}&:\mathbb R_+\times \mathbb C^{2k+2}\rightarrow \mathbb C\,; \qquad (t,x_*,x_{-k},\ldots,x_k) \mapsto \left\<b_k(t,x_*g_*+\sum_{j=-k}^kx_jg_j),g_n^*\right\>_{\alpha}\,, \\
\overline{b}_{k,*}&:\mathbb R_+\times \mathbb C^{2k+2}\rightarrow \mathbb C\,; \qquad (t,x_*,x_{-k},\ldots,x_k) \mapsto \left\<b_*(t,x_*g_*+\sum_{j=-k}^kx_jg_j),g_n^*\right\>_{\alpha}\,.
\end{align*}
Furthermore, $\psi_k(t,h)\in L_{\text{HS}}(H_{\alpha},H_{\alpha}^{T,k})$. Thus, for any
$g\in H_{\alpha}$ we have that $\psi_k(t,h)(g)\in H_{\alpha}^{T,k}$. We define the mappings
\begin{align*}
\overline{\psi}_{k,n}&:\mathbb R_+\times\mathbb C^{2k+2}\rightarrow H^*_{\alpha};
(t,x_*,x_{-k},\ldots,x_k)\mapsto \left\<\psi_k(t,x_*g_*+\sum_{j=-k}^kx_jg_j)(\cdot),g_n^*\right\>_{\alpha} \\
\overline{\psi}_{k,*}&:\mathbb R_+\times\mathbb C^{2k+2}\rightarrow H^*_{\alpha};
(t,x_*,x_{-k},\ldots,x_k)\mapsto \left\<\psi_*(t,x_*g_*+\sum_{j=-k}^kx_jg_j)(\cdot),g_n^*\right\>_{\alpha}
\end{align*}
for $n=-k,\ldots,k$. Now, since $\partial_x g_*=0$ and $\partial_xg_n=\lambda_ng_n+g_*/\sqrt{T}$, we find 
from the SPDE of $\widehat{f}_k$ the following $2k+2$ system of stochastic differential equations (after comparing terms with respect to the Riesz basis functions),
\begin{align*}
d\widehat{f}_{k,*}(t)&=\left(\frac1{\sqrt{T}}\sum_{n=-k}^k\widehat f_{k,n}(t)+
\overline{b}_{k,*}(t,\widehat{f}_{k,*}(t),\widehat{f}_{k,-k}(t),\ldots,\widehat{f}_{k,k}(t))\right)\,dt \\
&\qquad\qquad+
d\overline{\psi}_{k,*}(t,\widehat{f}_{k,*}(t-),\widehat{f}_{k,-k}(t-),\ldots,\widehat{f}_{k,k}(t-))(L(t)) \\
d\widehat{f}_{k,-k}(t)&=\left(\lambda_{-k}\widehat f_{k,-k}(t)+
\overline{b}_{k,-k}(t,\widehat{f}_{k,*}(t),\widehat{f}_{k,-k}(t),\ldots,\widehat{f}_{k,k}(t))\right)\,dt \\
&\qquad\qquad+
d\overline{\psi}_{k,-k}(t,\widehat{f}_{k,*}(t-),\widehat{f}_{k,-k}(t-),\ldots,\widehat{f}_{k,k}(t-))(L(t)) \\
\cdot & \cdots \\
\cdot & \cdots \\
d\widehat{f}_{k,k}(t)&=\left(\lambda_{k}\widehat f_{k,k}(t)+
\overline{b}_{k,k}(t,\widehat{f}_{k,*}(t),\widehat{f}_{k,-k}(t),\ldots,\widehat{f}_{k,k}(t))\right)\,dt \\
&\qquad\qquad+
d\overline{\psi}_{k,k}(t,\widehat{f}_{k,*}(t-),\widehat{f}_{k,-k}(t-),\ldots,\widehat{f}_{k,k}(t-))(L(t)) 
\end{align*}
In a compact matrix notation, defining $\mathbf{x}(t)=(x_1(t),x_2(t),\ldots,x_{2k+2}(t))'$ and 
$$
A=\left[\begin{array}{ccccc} \frac1{\sqrt{T}} & \frac1{\sqrt{T}} & \frac1{\sqrt{T}} &\cdots & \frac1{\sqrt{T}} \\
0 & \lambda_{-k} & 0 & \cdots & 0 \\
0 & 0 & \lambda_{-k+1} & \cdots & 0 \\
\cdot & \cdot & \cdot & \cdots & \cdot \\
\cdot & \cdot & \cdot & \cdots & \cdot \\
0 & 0 & 0 & \cdots & \lambda_k \end{array}
\right]\,,
$$
we have the dynamics
$$
d\mathbf{x}(t)=(A\mathbf{x}(t)+\overline{\mathbf{b}}_k(t,\mathbf{x}(t)))\,dt+d\overline{\mathbf{\psi}}_k(t,\mathbf{x}(t-))(L(t))\,,
$$
with $\widehat{f}_{k,*}=x_1, \widehat{f}_{k,-k}=x_2,\ldots,\widehat{f}_{k,k}=x_k$.
Using for example an Euler approximation, we can derive an iterative numerical scheme for this 
stochastic differential
equation in $\mathbb C^{2k+2}$. We refer to Kloeden and Platen~\cite{KP} for a detailed analysis
of numerical solution of stochastic differential equations driven by Wiener noise.

\end{document}